\newtheorem{theorem}{Theorem}[section]
\pgfplotsset{compat=1.7}
\def\ie{{\em i.e.,}\xspace}
\def\eg{{\em e.g.,}\xspace}
\def\cf{{\em cf.}\xspace}
\def\wrt{{\em w.r.t.}\xspace}
\DeclareMathOperator*{\argmax}{\arg\!\max}
\DeclareMathOperator*{\argmin}{\arg\!\min}
	\definecolor{sthlmLightBlue}{RGB}{214,237,252} 
	\definecolor{sthlmBlue}{RGB}{0,110,191} 
	\definecolor{sthlmLightGreen}{RGB}{213,247,244} 
	\definecolor{sthlmGreen}{RGB}{0,134,127} 
	\definecolor{sthlmLightGrey}{RGB}{213,217,225} 
	\definecolor{sthlmGrey}{RGB}{245,243,238} 
	\definecolor{sthlmDarkGrey}{RGB}{51,51,51} 
	\definecolor{sthlmLightOrange}{RGB}{255,215,210} 
	\definecolor{sthlmOrange}{RGB}{221,74,44} 
	\definecolor{sthlmLightPurple}{RGB}{241,230,252} 
	\definecolor{sthlmPurple}{RGB}{93,35,125} 
	\definecolor{sthlmLightRed}{RGB}{254,222,237} 
	\definecolor{sthlmRed}{RGB}{196,0,100} 
	\definecolor{sthlmYellow}{RGB}{252,191,10} 
\newcommand{\pw}{{\textcolor{gray}{\circ}}}
\newcommand{\pb}{{\textcolor{gray}{\bullet}}}
\title{$\epsilon$-Optimally Solving Zero-Sum POSGs}
\author{%
  Erwan Escudie\\
  University of Groningen \\
  \texttt{e.c.escudie@rug.nl}\\
   \And
   Matthia Sabatelli \\
   University of Groningen \\
   \texttt{m.sabatelli@rug.nl} \\
   \And
   Jilles S. Dibangoye \\
   University of Groningen \\
   \texttt{j.s.dibangoye@rug.nl}
}
\begin{document}

\maketitle

\begin{abstract}
 A recent method for solving zero-sum partially observable stochastic games (zs-POSGs) embeds the original game into a new one called the occupancy Markov game. This reformulation allows applying \citeauthor{bellman}'s principle of optimality to solve zs-POSGs. However, improving a current solution requires solving a linear program with exponentially many potential constraints, which significantly restricts the scalability of this approach. This paper exploits the optimal value function's novel uniform continuity properties to overcome this limitation. We first construct a new operator that is computationally more efficient than the state-of-the-art update rules without compromising optimality. In particular, improving a current solution now involves a linear program with an exponential drop in constraints.  We then also show that point-based value iteration algorithms utilizing our findings improve the scalability of existing methods while maintaining guarantees in various domains.
 \end{abstract}

\section{Introduction}
The application of \citeauthor{bellman}'s principle of optimality to partially observable stochastic games (POSGs) can be traced back to the 70s when \citet{astro-1965:optimcontr:journal:174,smallwood,sondik78} introduced it for single-player POSGs---\ie partially observable Markov decision processes (POMDPs)---imperfect-information games against Nature. This approach embeds the original game into a fully observable Markov game, namely the {occupancy Markov game} (OMG): it solves the latter and then transfers its solution to the original game. Recently, a large body of works has successfully adapted this methodology to different many-player subclasses of POSGs ranging from common-reward  \citep{Szer05, nayyar2013decentralized,DibangoyeABC13,Dibangoye:OMDP:2016,oliehoek2013sufficient} to zero-sum games \citep{Wiggers16,7963513,horak2017heuristic,HorBos-aaai19,Buffet2020,DelBufDibSaf-DGAA-23,SokotaDL0KB23}, allowing the transfer of theories and algorithms from fully observable Markov games to POSGs, without compromising optimality. In common-reward settings, there has been significant progress in understanding the decomposition of the OMG into subgames, identifying uniform continuity properties of optimal solutions, and using them to solve subgames efficiently. In zero-sum settings, however, the scalability remains limited.

There are two distinct but interrelated reasons for the current ineffectiveness of this approach in zero-sum settings. The most widespread reason is the curse of dimensionality: the states of an OMG, \ie {occupancy states}, reside in a continuum whose dimensions grow exponentially with time. Uniform continuity properties of optimal solutions have been identified to overcome this drawback, but their effectiveness remains questionable \citep{Wiggers16,DelBufDibSaf-DGAA-23}. The less well-known reason for this poor scaling behavior is the inefficiency of the update operators used to improve a current solution. State-of-the-art update operators use linear programs with exponentially many potential constraints \citep{DelBufDibSaf-DGAA-23}. Although distinct, these limitations are interdependent. The weaker the uniform continuity properties designed to cope with the curse of dimensionality, the less effective the update operators are in improving current solutions. The conjunction of these two limitations has restricted the scaling behaviors of this approach in competitive settings.

This paper exploits recent uniform continuity properties to construct efficient update operators in zero-sum partially observable stochastic games. We build on the recent proof that the optimal value function of the corresponding OMG is the maximum of piecewise-linear and concave functions over occupancy states \citep{cunha2023convex}. Although this uniform continuity property is stronger than the previous ones, little is known about how it affects update operators. Our main contribution is the proof that the update operator exploiting this property consists of two components. Firstly, a linear program with an exponential drop in constraints enables any subgame to be efficiently solved. Secondly, a polynomial-time update rule uses the solution of a subgame to improve the current value function. We also show how to use this update operator in the point-based value iteration algorithm. Experimental results conducted on a set of benchmarks from the literature support the effectiveness of our approach.

\section{Background}
This section presents the original formulation of zero-sum partially observable stochastic games alongside its reformulation.

\subsection{Problem Formulation}

A zero-sum, partially observable stochastic game (zs-POSGs) \citep{HansenBZ04} is a tuple $M= ( X,U,Z,p,r,s_0,\gamma, \ell)$ where $\pb$ and $\circ$ represent the maximizing and minimizing players, respectively. $X$ denotes a finite set of hidden states, denoted $x$ or $y$. $U = U^{\pb}\times U^{\pw}$ represents the finite set of joint controls $u = (u^{\pb}, u^{\pw})$. Similarly, $Z = Z^{\pb}\times Z^{\pw}$ denotes the finite set of joint observations $z=(z^{\pb}, z^{\pw})$. The transition function $p\colon X\times U\to \triangle(X\times Z)$ specifies the probability $p_{xy}^{uz} = p(y,z|x,u)$ of the game being in state $y$ upon taking control $u$ in state $x$ and receiving observation $z$. The reward function $r\colon X\times U\to \mathbb{R}$ defines the immediate reward $r_{xu} = r(x,u)$ that player ${\pb}$ receives upon taking control $u$ in state $x$. Additionally, $s_0$ represents the initial state distribution, $\gamma\in [0,1)$ is a discount factor, and $\ell<\infty$ is the number of game stages.

The primary aim of this study is to tackle general zs-POSGs. However, this research also explores the application of our findings in subclasses. The subclasses under scrutiny in this paper are as follows: i) $M$ with full observability, see \citep{filar-competitiveMDP}, which implies the existence of a function, $f_{\mathtt{fo}}\colon Z^{\pb}\cup Z^{\pw} \to X$, such that for every non-zero value of $p(z,y|x,u)$, $f_{\mathtt{fo}}(z^{\pb})=f_{\mathtt{fo}}(z^{\pw})=y$; ii) $M$ with public actions and observations, see \citep{ghosh2004zero}, which implies the existence of a function, $f_{\mathtt{pao}}\colon Z^{\pb}\cup Z^{\pw} \to Z\times U$, such that for every non-zero value of $p(z,y|x,u)$,  $f_{\mathtt{pao}}(z^{\pb})=f_{\mathtt{pao}}(z^{\pw})=(z,u)$.

Optimally solving $M$ aims at finding a joint policy $a_{0:} = (a^{\pb}_{0:},a^{\pw}_{0:})$, one policy for each player. A policy of player ${\pb}$ (resp. ${\circ}$), denoted  $a^{\pb}_{0:} = (a^{\pb}_0,\ldots, a^{\pb}_{\ell-1})$, is a sequence of decision rules from stage $0$ down to $\ell-1$, one  decision rule per stage. A $\tau$-stage decision rule $a^{\pb}_\tau\colon O^{\pb}_\tau \to \triangle(U^{\pb})$ maps histories $o^{\pb}_\tau = (u^{\pb}_{0:\tau-1},z^{\pb}_{1:\tau})$ of controls and observations of player ${\pb}$ to distributions over its controls. A $\tau$-stage joint decision rule $a_\tau\colon O_\tau\to\triangle(U)$ maps joint histories $o_\tau = (o^{\pb}_\tau,o^{\pw}_\tau)$ of actions and observations to distributions over joint controls, \ie for all joint control $u\in U$, joint history $o\in O_\tau$, $a_\tau(u|o) = a^{\pb}_\tau(u^{\pb}_\tau|o^{\pb}_\tau)\cdot a^{\pw}_\tau(u^{\pw}_\tau|o^{\pw}_\tau)$. We will denote $A^{\pb}_\tau$ (resp. $A^{\pw}_\tau$) the set of all $\tau$-stage  individual decision rules $a^{\pb}_\tau$ (resp. $a^{\pw}_\tau$), and $A_\tau = A^{\pb}_\tau \times A^{\pw}_\tau$ the set of all $\tau$-stage joint decision rules.

The state- and action-value functions under a fixed joint policy satisfy  \citeauthor{bellman} equations: for any stage $\tau$, joint policy $a_{\tau:} = (a_\tau,a_{\tau+1},\ldots,a_{\ell-1})$,  $\alpha^{a_{\tau:}}_\tau\colon (x,o)\mapsto \textstyle \sum_{u}a_\tau(u|o) \cdot \beta^{a_{\tau:}}_\tau(x,o,u)$ and 
$\beta^{a_{\tau:}}_\tau \colon(x,o,u)\mapsto\textstyle r(x,u) +\gamma\sum_{y}\sum_{z}\alpha^{a_{\tau+1:}}_{\tau+1}(y,(o,u,z))$
with boundary condition $\alpha^{\cdot}_\ell(\cdot) = \beta^{\cdot}_\ell(\cdot) = 0$.  An optimal joint policy $\bar{a}_{0:} = (\bar{a}^{\pb}_{0:},\bar{a}^{\pw}_{0:})$, \ie a Nash equilibrium, consists of player policies, \eg $\bar{a}^{\pb}_{0:}$, whose worst-case expected returns, \eg $\upsilon^{\pb}_0(s_0;\bar{a}^{\pb}_{0:}) =\textstyle \min_{a^{\pw}_{0:}}~\sum_{x} s_0(x)\cdot \alpha^{ \bar{a}^{\pb}_{0:},a^{\pw}_{0:} }_0(x)$, regardless the opponent policy are the value of the game $\upsilon^*_0(s_0)=\textstyle \max_{a^{\pb}_{0:}} \min_{a^{\pw}_{0:}}~\sum_{x} s_0(x)\cdot \alpha^{ a^{\pb}_{0:},a^{\pw}_{0:} }_0(x)$.

Optimally solving $M$ using \citeauthor{bellman}'s principle of optimality is perceived as challenging due to the complexity involved in defining a suitable common ground that allows the original game to be segmented into subgames that can be solved recursively  \citep{HansenBZ04}. To comprehend this challenge better, it is imperative to note that each player in the game acts simultaneously without having the ability to perceive the current state of the game or communicate their actions and observations with others. Consequently, neither the individual players' histories nor the joint histories can act as a common ground. This insight explains the impetus for a problem reformulation.

\subsection{Problem Reformulation}

To overcome the limitation of the original formulation, \citet{Szer05} and later \citet{nayyar2013decentralized,DibangoyeABC13,Dibangoye:OMDP:2016,oliehoek2013sufficient} suggest formalizing $M$ from the perspective of an offline central planner. This planner acts on behalf of both players by selecting a joint decision rule to be executed at each stage based on all data available about the game at the planning phase, namely the information state. The information state at the end of stage $\tau$, denoted $\iota_{\tau+1} = (s_0,a_{0:\tau})$, is a sequence of joint decision rules the planner selected starting at the initial state distribution $s_0$. Hence, the information state satisfies the following recursion: $\iota_0 = s_0$ and $\iota_{\tau+1} = (\iota_\tau, a_\tau)$ for all stage $\tau$, resulting in an ever-growing sequence. \citet{DibangoyeABC13,Dibangoye:OMDP:2016} introduce the concept of occupancy state along with bookkeeping to replace information state without losing the ability to find an optimal joint policy. The $\tau$-stage occupancy state $s_\tau$ is a distribution over hidden states and joint histories conditional on information state $\iota_\tau$ at stage $\tau$, \ie  $s_\tau\colon (x,o) \mapsto \Pr\{x,o|\iota_\tau\}$. It is important to stress that in general the occupancy state is neither public nor accessible to either player at the online execution phase. It summarizes the total information available to a central algorithm to solve the original game using \citeauthor{bellman}'s principle of optimality at an offline planning phase.  

It is worth noting that occupancy states possess several crucial properties. When restricted to subclasses, occupancy states become more compact. For instance, when a game exhibits full observability, the occupancy state represents a state. If the game has either public actions and observations or one-sided partial observability, the occupancy state becomes a belief state \citep{HorBos-aaai19}. Lastly, for one-sided information-sharing games, the occupancy state is a distribution over belief states \citep{pmlr-v119-xie20a}. Occupancy states serve as sufficient statistics of the information state when estimating the immediate reward to be gained by executing a joint decision rule:
$
R\colon (s_\tau,a_\tau) \mapsto  \sum_{x\in X}\sum_{o\in O_\tau} s_\tau(x,o)\sum_{u\in U} a_\tau(u|o) \cdot r(x,u).
$
In addition, the $\tau$-stage occupancy state $s_\tau$ is a sufficient statistic of the information state $\iota_\tau$ to predict the next occupancy state $s_{\tau+1}$ upon taking a joint decision rule $a_\tau$, \ie 
$s_{\tau+1}\colon (y, (o,u,z)) \mapsto a_\tau(u|o) \sum_{x\in X} s_\tau(x,o)\cdot p(y,z|x,u)$,
where $T\colon (s_\tau,a_\tau) \mapsto s_{\tau+1}$ describes the transition rule. Together, these properties make the occupancy state a sufficient statistic of the information state when estimating the value function under a fixed joint policy. Unfortunately, optimal joint policies cannot depend on occupancy states since players cannot access them. This insight makes the central planning process an open-loop control problem. However, a bookkeeping strategy enables us to keep track of relevant information to ensure that the planning process stores the optimal joint policy \citep{DelBufDibSaf-DGAA-23}. 

The open-loop control problem $M'= (S,A,T,R,s_0,\gamma, \ell)$ that occupancy states describe is called the occupancy Markov game. Similarly to POMDPs, it was proven that $M$ can be recast into $M'$ and an optimal solution of $M'$ is also an optimal solution for $M$ \citep{Wiggers16,DelBufDibSaf-DGAA-23}. $M'$ is a non-observable deterministic Markov game with respect to $M$, where state space $S = \cup_{\tau=0}^{\ell-1}~ S_\tau$ is the set of occupancy states up to stage $\ell-1$; action space $A = \cup_{\tau=0}^{\ell-1}~ A_\tau$ is the set of joint decision rules up to stage $\ell-1$; the transition rule is $T$; the reward model is $R$; and quantities $s_0, \gamma,$ and $\ell$ are as in the original game $M$. 
Optimally solving $M'$ (resp. $M$) aims at finding a joint policy $ (\bar{a}^{\pb}_{0:},\bar{a}^{\pw}_{0:})$ such that $\upsilon^{\pb}_0(s_0;\bar{a}^{\pb}_{0:})=\upsilon^{\pw}_0(s_0;\bar{a}^{\pw}_{0:})=\upsilon^*_0(s_0)$. The application of \citeauthor{bellman}'s principle of optimality allows us to compute $\upsilon^*_0(s_0)$ by solving subgames recursively, \ie for any stage $\tau$, occupancy state $s_\tau$, 
\begin{align}
\upsilon_\tau^*(s_\tau) &=\textstyle \max_{a^{\pb}_\tau\in A^{\pb}_\tau}\min_{a^{\pw}_\tau\in A^{\pw}_\tau} R(s_\tau,a_\tau) + \gamma \upsilon^*_{\tau+1}(T(s_\tau,a_\tau))
\label{eqn:bellman:optimality}
\end{align}
with boundary condition $\upsilon_\ell^*(\cdot)=0$. As we solve the problem $M'$, we need to ensure consistency with $M$. For this, we use a bookkeeping strategy to keep track of partial policies $\bar{a}^{\pb}_{\tau:}$ and $\bar{a}^{\pw}_{\tau:}$. These policies ensure that we always meet constraint $\upsilon_\tau^{\pw}(s_\tau;\bar{a}^{\pw}_{\tau:})=\upsilon_\tau^{\pb}(s_\tau;\bar{a}^{\pb}_{\tau:}) = \upsilon_\tau^*(s_\tau)$ throughout the planning process. To simplify notation, we use $\upsilon_\tau^{\pw}(s_\tau)$ and $\upsilon_\tau^{\pb}(s_\tau)$ instead of $\upsilon_\tau^{\pw}(s_\tau;\bar{a}^{\pw}_{\tau:})$ and $\upsilon_\tau^{\pb}(s_\tau;\bar{a}^{\pb}_{\tau:})$ when the policies are not explicitly needed.

In principle, exact backward induction should apply to $M'$. Unfortunately, the occupancy states lie in a continuum, which makes exact backward induction infeasible. When the optimal value functions exhibit uniform continuity properties, one can restrict attention to a small set of representative occupancy states and iteratively apply value updates to those points while preserving the ability to achieve the $\epsilon$-optimal value of the game. Let $\upsilon^{\pb}_{\tau+1}$ and $\upsilon^{\pw}_{\tau+1}$ be the value functions at stage $\tau+1$ of players ${\pb}$ and ${\circ}$, respectively. Operators at any point $s_\tau$ involved in solving $M'$ include the greedy-action selection operators $\mathbb{G}^{\pb} \colon (\upsilon^{\pb}_{\tau+1}, s_\tau)\mapsto a^{\pb}_\tau$ and $\mathbb{G}^{\pw} \colon (\upsilon^{\pw}_{\tau+1}, s_\tau)\mapsto a^{\pw}_\tau$, and the \citeauthor{bellman}'s update operators $\mathbb{H}^{\pb}\colon (\upsilon^{\pb}_{\tau+1}, s_\tau,a_\tau) \mapsto \upsilon^{\pb}_\tau$ and $\mathbb{H}^{\pw}\colon (\upsilon^{\pw}_{\tau+1}, s_\tau,a_\tau) \mapsto \upsilon^{\pw}_\tau$.

\subsection{Uniform Continuity Properties}

Various uniform continuity properties of optimal value functions have been identified recently to define efficient point-based operators  $\mathbb{G}$ and $\mathbb{H}$ \citep{Wiggers16,DelBufDibSaf-DGAA-23,cunha2023convex}. To discuss these properties, we need two concepts associated with an occupancy state, \ie marginal and conditional occupancy states. For any occupancy state $s_\tau$, the marginal occupancy state $s^{\mathtt{m},{\circ}}_\tau$ of player ${\circ}$ is defined as the marginal distribution of $s_\tau$ over histories $O_\tau^{\pw}$, \ie for all  history $o^{\pw} \in O_\tau^{\pw}$, $s^{\mathtt{m},{\circ}}_\tau(o^{\pw}) = \sum_x\sum_{o^{\pb}} s_\tau(x,(o^{\pw},o^{\pb}))$. Furthermore, for any occupancy state $s_\tau$ and any history $o^{\pw}\in O_\tau^{\pw}$, the conditional occupancy state $s^{\mathtt{c},o^{\pw}}_\tau$ of $s_\tau$ associated with history $o^{\pw}$ is defined as the conditional distribution of $s_\tau$ associated with history $o^{\pw}$, \ie for all $x\in X$ and $o^{\pb} \in O_\tau^{\pb}$, $s_\tau(x,(o^{\pw},o^{\pb})) = s^{\mathtt{c},o^{\pw}}_\tau(x,o^{\pb} ) \cdot s^{\mathtt{m},{\circ}}_\tau(o^{\pw})$. We shall use $s^{\mathtt{c},{\circ}}_\tau$ to denote a family $\{s^{\mathtt{c},o^{\pw}}_\tau | o^{\pw}\in O_\tau^{\pw}\}$  of conditional occupancy states and $s^{\mathtt{c},{\circ}}_\tau\odot s^{\mathtt{m},{\circ}}_\tau$ to describe occupancy state $s_\tau$ such that, for all $x\in X$ and $(o^{\pw},o^{\pb})\in O_\tau$, $s_\tau(x,(o^{\pw},o^{\pb})) = s^{\mathtt{c},o^{\pw}}_\tau(x,o^{\pb} ) \cdot s^{\mathtt{m},{\circ}}_\tau(o^{\pw})$. We are now ready to state the main uniform continuity properties.

\begin{figure}[H]
\centering
\begin{tikzpicture}[
        scale=1.3,
        IS/.style={sthlmRed, thick},
        LM/.style={sthlmRed, thick},
        axis/.style={very thick, ->, >=stealth', line join=miter},
        important line/.style={thick}, dashed line/.style={dashed, thin},
        every node/.style={color=black},
        dot/.style={circle,fill=red,minimum size=8pt,inner sep=0pt, outer sep=-1pt},
    ]

    \draw[axis, -, line width=.3em] (-6,2.75) -- (-6,0) -- (-2.25,0) -- (-2.25,2.75);
    \node at (-6.2,-0.3) {\textbf{A}};
    \node at (-0.2,-0.3) {\textbf{B}};
    
    \draw[-, sthlmRed, line width=.1em] (-2.25,2)--(-6,0);

    \draw[-, sthlmGreen, line width=.1em] (-6,1.65) -- (-2.25,1.25);     

    \draw[-, sthlmBlue, line width=.1em] (-6,2.25)--(-2.25,0);
 
    \node[scale=1, sthlmBlue] at (-2.75,-0.3) {$\pmb{s^{\mathtt{m},\circ}}$};
    \node[scale=1, sthlmRed] at (-5.07,-0.3) {$\pmb{s^{\mathtt{m},\circ}}$};

    \draw[-, dotted, sthlmOrange] (-3.67,-0.1) -- (-3.67,2.75);    
    \node[scale=1, sthlmOrange] at (-3.67,-0.3) {$\pmb{s^{\mathtt{m},\circ}}$};

    \draw[-, dotted, sthlmRed] (-5.07,2.75) -- (-5.07,-0.1);
    \node[scale=.5, sthlmRed] at (-6.5,.5) {$g_{s^{\mathtt{c},\circ}}(\pmb{s^{\mathtt{m},\circ}})$};    
    \draw[-, dotted, sthlmRed] (-6.1,.5) -- (-2.25,.5);    
    \node[rotate=90, scale=2, sthlmRed] at (-5.07,.5) {$\bullet$};
    \node[rotate=90, scale=1, sthlmRed] at (-6,.5) {$\bullet$};
    \node[rotate=90, scale=1, sthlmRed] at (-5.07,0) {$\bullet$};

    \draw[-, dotted, sthlmBlue] (-2.75,-0.1) -- (-2.75,2.75);    
    \node[scale=.5, sthlmBlue] at (-6.5,.3) {$g_{s^{\mathtt{c},\circ}}(\pmb{s^{\mathtt{m},\circ}})$};    
    \draw[-, dotted, sthlmBlue] (-6.1,.3) -- (-2.25,.3);    
    \node[scale=2, sthlmBlue] at (-2.75,.3) {$\bullet$};
    \node[scale=1, sthlmBlue] at (-6,.3) {$\bullet$};
    \node[scale=1, sthlmBlue] at (-2.75,0) {$\bullet$};

    \draw[axis, -, line width=.3em] (0,2.75) -- (0,0) -- (3.75,0) -- (3.75,2.75);

    \draw[-, sthlmRed, line width=.1em] (3.75,2)--(0,0);

    \draw[-, sthlmGreen, line width=.1em] (0,1.65) -- (3.75,1.25);     

    \draw[-, sthlmBlue, line width=.1em] (0,2.25)--(3.75,0);

    \draw[-, dotted, sthlmOrange] (2.33,-0.1) -- (2.33,2.75);    
    \node[scale=1, sthlmOrange] at (2.33,-0.3) {$\pmb{s^{\mathtt{m},\circ}}$};
    
    \node[scale=.5, sthlmOrange] at (-.5,1.5) {$\bar{\upsilon}_\tau(\pmb{s^{\mathtt{m},\circ}  \odot  s^{\mathtt{c},\circ}})$};    

    \node[scale=.5] at (.75,1.18) {$\pmb{\kappa\|\textcolor{sthlmOrange}{s}-\textcolor{sthlmOrange}{s^{\mathtt{m},\circ}}  \odot  \textcolor{sthlmBlue}{s^{\mathtt{c},\circ}}\|_1}$};   
    
    \node[scale=.5, sthlmBlue] at (-.5,.85) {$g_{s^{\mathtt{c},\circ}}(\pmb{\textcolor{sthlmOrange}{s^{\mathtt{m},\circ}}})$};    
    \draw[-, dotted, sthlmOrange] (-.1,.85) -- (3.75,.85);    
    \node[rotate=90, scale=2, sthlmOrange] at (2.33,.85) {$\bullet$};
    \node[rotate=90, scale=1, sthlmOrange] at (2.33,0) {$\bullet$};
    \node[rotate=90, scale=1, sthlmOrange] at (0,.85) {$\bullet$};
    \node[rotate=90, scale=1, sthlmOrange] at (0,1.5) {$\bullet$};
    \draw[dotted, sthlmOrange] (0,.85) .. controls (.25,.85) and (.25,1.5) .. (0,1.5);
\end{tikzpicture}
\caption{Generalization across marginal occupancy states of the value function given by a collection $V = \{\textcolor{sthlmBlue}{g_{s^{\mathtt{c},\circ}}}, \textcolor{sthlmRed}{g_{s^{\mathtt{c},\circ}}}, \textcolor{sthlmGreen}{g_{s^{\mathtt{c},\circ}}}\}$ of linear functions over unknown marginal occupancy states. Figure \textbf{A} shows no generalization on marginal occupancy state $\textcolor{sthlmOrange}{\pmb{s^{\mathtt{m},\circ}}}$ because $\textcolor{sthlmOrange}{\pmb{s^{\mathtt{m},\circ}}} \notin \{\textcolor{sthlmBlue}{s^{\mathtt{m},\circ}}, \textcolor{sthlmRed}{s^{\mathtt{m},\circ}}, \textcolor{sthlmGreen}{s^{\mathtt{m},\circ}}\}$, \cf Theorem \ref{thm:wiggers}. Figure \textbf{B} shows generalization over unknown marginal occupancy state $\textcolor{sthlmOrange}{\pmb{s^{\mathtt{m},\circ}}}$ from known marginal occupancy state $\textcolor{sthlmBlue}{\pmb{s^{\mathtt{m},\circ}}}$ with offset $\kappa\|\textcolor{sthlmOrange}{s}-\textcolor{sthlmOrange}{s^{\mathtt{m},\circ}}  \odot  \textcolor{sthlmBlue}{s^{\mathtt{c},\circ}}\|_1$, \cf Theorem \ref{thm:delage}. \textbf{Best viewed in color}.}
\label{figure:weak:convex:function:representations}
\end{figure} 

\begin{theorem}[Adapted from \citet{Wiggers16}]
\label{thm:wiggers}
For any arbitrary $M'$, the optimal value functions $(\upsilon^*_0,\ldots, \upsilon^*_\ell)$ solutions of (\ref{eqn:bellman:optimality}) are convex over marginal occupancy states for a fixed conditional occupancy-state family, \ie there exists collections $(G_0,\ldots,G_\ell)$ of linear functions over marginal occupancy states such that: for any stage $\tau$, occupancy state $s_\tau = s^{\mathtt{c},\circ}_\tau\odot s^{\mathtt{m},\circ}_\tau$, 
\begin{align*} 
\upsilon^*_\tau(s_\tau) &=\textstyle \max_{g_{s^{\mathtt{c},\circ}_\tau} \in G_\tau} g_{s^{\mathtt{c},\circ}_\tau} (s^{\mathtt{m},\circ}_\tau),
\end{align*}
where $g_{s^{\mathtt{c},\circ}_\tau}\colon O_\tau^{\pw} \to \mathbb{R}$ is a function  associated with conditional occupancy-state family $s^{\mathtt{c},\circ}_\tau$.
\end{theorem}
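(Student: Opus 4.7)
The plan is backward induction on $\tau$: the base case $\tau=\ell$ is immediate because $\upsilon^*_\ell\equiv 0$ is the maximum of a singleton collection containing the zero linear function. In the inductive step, I fix an arbitrary conditional family $s^{\mathtt{c},\circ}_\tau$ and must exhibit a finite set of linear functions of the marginal $s^{\mathtt{m},\circ}_\tau$ whose maximum equals $\upsilon^*_\tau(s^{\mathtt{c},\circ}_\tau\odot s^{\mathtt{m},\circ}_\tau)$.

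Two structural observations will drive the induction. \emph{(i)} The reward decomposes as $R(s_\tau,a_\tau)=\sum_{o^{\pw}} s^{\mathtt{m},\circ}_\tau(o^{\pw})\cdot R'_{o^{\pw}}(s^{\mathtt{c},o^{\pw}}_\tau,a^{\pb}_\tau,a^{\pw}_\tau(\cdot\mid o^{\pw}))$, a sum over $o^{\pw}$ in which the minimizer enters only through its matching-history decision. \emph{(ii)} The next marginal factorizes as $s^{\mathtt{m},\circ}_{\tau+1}((o^{\pw}_\tau,u^{\pw},z^{\pw}))=s^{\mathtt{m},\circ}_\tau(o^{\pw}_\tau)\cdot h(o^{\pw}_\tau,u^{\pw},z^{\pw};s^{\mathtt{c},o^{\pw}_\tau}_\tau,a_\tau)$, while the next conditional family $s^{\mathtt{c},\circ}_{\tau+1}$ is \emph{independent} of $s^{\mathtt{m},\circ}_\tau$ (the $s^{\mathtt{m},\circ}_\tau$ factor cancels between numerator and denominator of the Bayes update). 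Thanks to (ii), the induction hypothesis applied at $\tau+1$ yields, for fixed $s^{\mathtt{c},\circ}_\tau$ and $a_\tau$, a \emph{fixed} finite collection of linear functions of $s^{\mathtt{m},\circ}_{\tau+1}$; pulling this back through the proportionality in (ii) turns it into a max of finitely many linear functions of $s^{\mathtt{m},\circ}_\tau$ inheriting the same per-$o^{\pw}$ separability as the reward.

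To execute the $\max\min$ in~(\ref{eqn:bellman:optimality}), I would exploit perfect recall: the minimizer's behavioral decision $a^{\pw}_\tau(\cdot\mid o^{\pw})$ is chosen independently across $o^{\pw}$, so once the maximizer's strategy is fixed, the inner minimization commutes with the sum over $o^{\pw}$. Representing the maximizer's strategy as a mixture $\mu^{\pb}$ over pure policies (equivalent to behavioral by Kuhn's theorem in perfect-recall games), the one-stage $\max\min$ becomes a finite linear program with minimizer variables organized per $o^{\pw}$ and payoff coefficients linear in $s^{\mathtt{m},\circ}_\tau$; its optimal value is piecewise-linear convex in $s^{\mathtt{m},\circ}_\tau$ by standard parametric-LP sensitivity, each active basis contributing one linear piece to the desired subcollection attached to $s^{\mathtt{c},\circ}_\tau$. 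Collecting subcollections across all conditional families yields $G_\tau$.

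The hard part will be the rigorous justification of the min--sum commutation \emph{jointly} with an outer maximization that couples histories via $o^{\pb}$: one must verify that the Kuhn reformulation of the maximizer does not destroy the per-$o^{\pw}$ separability of the minimizer's objective, and that the continuation term (itself a max over finitely many $g$'s from the induction hypothesis) can be folded into the matrix game without breaking this separability. Organizing the one-stage LP with the minimizer's variables indexed per $o^{\pw}$ and the maximizer's variables as a mixture over pure policies, and invoking the minimax theorem at the level of the $(\mu^{\pb},g)$ vs.\ $a^{\pw}_\tau$ coupling, should settle this cleanly.
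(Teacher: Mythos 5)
First, a point of reference: the paper does not actually prove Theorem~\ref{thm:wiggers}. It is background material adapted from \citet{Wiggers16}, and the text explicitly defers to that reference for the proof, so your attempt can only be compared against the standard argument rather than an in-paper one.

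Your architecture is plausible, but it takes a genuinely different --- and harder --- route than the usual one. The direct argument fixes the maximizer's \emph{entire} future policy $a^{\pb}_{\tau:}$ at once; by perfect recall the minimizer's best response then splits independently across its stage-$\tau$ histories, so the best-response value is $\sum_{o^{\pw}} s^{\mathtt{m},\circ}_\tau(o^{\pw})\cdot \nu(o^{\pw}, s^{\mathtt{c},o^{\pw}}_\tau, a^{\pb}_{\tau:})$, linear in the marginal for a fixed conditional family, and maximizing over maximizer policies gives convexity with no minimax swap needed. Your one-stage induction instead substitutes the inductive representation of $\upsilon^*_{\tau+1}$ into the backup and thereby creates a $\max\min\max$ nesting: the continuation value is a \emph{single} max over $G_{\tau+1}$ that couples all of the minimizer's histories, so the min over $a^{\pw}_\tau$ does not commute with the sum over $o^{\pw}$ until you lift that inner max to distributions $\xi\in\triangle(G_{\tau+1})$ and invoke von Neumann to swap it with the min. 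You correctly flag this as ``the hard part'' but leave it at ``should settle this cleanly''; it is the crux of your route, not a detail (it is precisely the maneuver the paper itself carries out in the proof of Theorem~\ref{thm:greedy}). Two further facts you rely on but do not establish: (i) the conditional family at $\tau+1$ is independent not only of $s^{\mathtt{m},\circ}_\tau$ but also of $a^{\pw}_\tau$, because the factor $a^{\pw}_\tau(u^{\pw}|o^{\pw})$ cancels in the Bayes update --- without this the ``fixed collection from the induction hypothesis'' would move with the minimizer's own variable and the per-$o^{\pw}$ decomposition would fail; and (ii) finiteness of $G_\tau$ needs the parametric-LP piecewise-linearity argument carried out once per conditional family, with $G_\tau$ partitioned by conditional family so that the max in the statement ranges only over the functions attached to the given $s^{\mathtt{c},\circ}_\tau$. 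With those gaps filled, your induction would go through, but as written the central step is asserted rather than proved.
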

\citet{Wiggers16} presents a detailed proof of Theorem \ref{thm:wiggers}, which asserts that if two occupancy states exhibit identical conditional occupancy-state families, it is possible to generalize the value from the first occupancy state to the second one. Unfortunately, this conditional uniform continuity property alone does not facilitate value generalization across different occupancy states. Figure \ref{figure:weak:convex:function:representations} (\textbf{A}) illustrates the lack of generalization capabilities across occupancy states.  To overcome this limitation, \citet{DelBufDibSaf-DGAA-23} blends the Lipschitz continuity and the conditional uniform continuity property, hence allowing for a value generalization across unknown occupancy states.

\begin{theorem}[Adapted from \citet{DelBufDibSaf-DGAA-23}]
\label{thm:delage}
For any arbitrary $M'$, the optimal value functions $(\upsilon^*_0,\ldots, \upsilon^*_\ell)$ solutions of (\ref{eqn:bellman:optimality}) are Lipschitz continuous over occupancy states, \ie there exists collections $(G_0,\ldots,G_\ell)$ of linear functions over marginal occupancy states such that: for any stage $\tau$,  $\kappa_\tau$ is the Lipschitz constant  associated with $\upsilon^*_\tau$,  and occupancy state $s_\tau = \bar{s}^{\mathtt{c},\circ}_\tau\odot s^{\mathtt{m},\circ}_\tau$, 
\begin{align*} 
\upsilon^*_\tau(s_\tau) &\leq\textstyle  g_{s^{\mathtt{c},\circ}_\tau} (s^{\mathtt{m},\circ}_\tau) + \kappa_\tau \|s_\tau - s^{\mathtt{c},\circ}_\tau\odot s^{\mathtt{m},\circ}_\tau \|_1
\end{align*}
where $g_{s^{\mathtt{c},\circ}_\tau}\colon O_\tau^{\pw} \to \mathbb{R}$ is any function in $V_\tau$ associated with conditional occupancy-state family $s^{\mathtt{c},\circ}_\tau$ .
\end{theorem}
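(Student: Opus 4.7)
The plan is to prove this in two steps: first establish that each $\upsilon^*_\tau$ is Lipschitz continuous over occupancy states in the $\ell_1$-norm with some constant $\kappa_\tau$, and then specialize the resulting Lipschitz inequality by anchoring at the particular occupancy state $s^{\mathtt{c},\circ}_\tau \odot s^{\mathtt{m},\circ}_\tau$ whose exact value is already supplied by Theorem \ref{thm:wiggers}. The intuition is that Theorem \ref{thm:wiggers} only provides information on occupancy states sharing the same conditional family as the anchor, so Lipschitz smoothness provides the ``transport cost'' needed to migrate that information to any nearby occupancy state with a different conditional family.

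I would prove the Lipschitz bound by backward induction on $\tau$. The base case $\upsilon^*_\ell \equiv 0$ is trivially $0$-Lipschitz. For the inductive step, assuming $\upsilon^*_{\tau+1}$ is $\kappa_{\tau+1}$-Lipschitz, I would fix two occupancy states $s_\tau, s'_\tau$ and apply the standard saddle-point perturbation inequality
\[
\Bigl|\max_{a^{\pb}_\tau} \min_{a^{\pw}_\tau} f(a_\tau) - \max_{a^{\pb}_\tau} \min_{a^{\pw}_\tau} g(a_\tau)\Bigr| \leq \sup_{a_\tau} |f(a_\tau) - g(a_\tau)|
\]
to equation~(\ref{eqn:bellman:optimality}) with $f(a_\tau) = R(s_\tau, a_\tau) + \gamma \upsilon^*_{\tau+1}(T(s_\tau, a_\tau))$ and $g(a_\tau) = R(s'_\tau, a_\tau) + \gamma \upsilon^*_{\tau+1}(T(s'_\tau, a_\tau))$. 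Two structural observations finish the step. First, $R(\cdot, a_\tau)$ is linear in the occupancy state with coefficients bounded by $\|r\|_\infty$, so $|R(s_\tau, a_\tau) - R(s'_\tau, a_\tau)| \leq \|r\|_\infty \|s_\tau - s'_\tau\|_1$. Second, from the explicit form of $T$, the transition operator is $\ell_1$-non-expansive because the mass at each new history is a re-weighting of the old mass by nonnegative kernel values $a_\tau(u\mid o) p(y,z\mid x,u)$; pulling absolute values inside the sum via the triangle inequality and summing out $(y,z,u)$ yields $\|T(s_\tau,a_\tau) - T(s'_\tau, a_\tau)\|_1 \leq \|s_\tau - s'_\tau\|_1$. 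Together these give the recursion $\kappa_\tau \leq \|r\|_\infty + \gamma \kappa_{\tau+1}$.

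With Lipschitz continuity in hand, setting $s'_\tau = s^{\mathtt{c},\circ}_\tau \odot s^{\mathtt{m},\circ}_\tau$ (which shares the marginal $s^{\mathtt{m},\circ}_\tau$ of $s_\tau$ but uses a possibly different conditional family) and combining the Lipschitz inequality with Theorem \ref{thm:wiggers} yields
\[
\upsilon^*_\tau(s_\tau) \leq \upsilon^*_\tau(s^{\mathtt{c},\circ}_\tau \odot s^{\mathtt{m},\circ}_\tau) + \kappa_\tau \|s_\tau - s^{\mathtt{c},\circ}_\tau \odot s^{\mathtt{m},\circ}_\tau\|_1 = \max_{g \in G_\tau} g(s^{\mathtt{m},\circ}_\tau) + \kappa_\tau \|s_\tau - s^{\mathtt{c},\circ}_\tau \odot s^{\mathtt{m},\circ}_\tau\|_1,
\]
and taking $g_{s^{\mathtt{c},\circ}_\tau}$ to be a maximizer over $G_\tau$ at marginal $s^{\mathtt{m},\circ}_\tau$ gives the claimed inequality.

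The main obstacle I anticipate is verifying the $\ell_1$-non-expansiveness of $T$ rigorously, since, unlike in the POMDP belief update, the updated occupancy state is unnormalized per-history; the key identity is that the kernel $a_\tau(u\mid o)p(y,z\mid x,u)$ sums to one over $(y,z,u)$ for every $(x,o)$, so per-history mass is neither created nor destroyed under the triangle inequality step. A secondary subtlety is that the saddle-point perturbation must be uniform in the joint decision rule rather than at any particular saddle point, which is exactly what the $\sup$ form of the perturbation inequality provides; care is needed because the inner $\min$ and outer $\max$ are taken over decision rules of different players whose saddle values may differ between $s_\tau$ and $s'_\tau$.
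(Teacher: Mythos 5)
The paper does not actually prove Theorem~\ref{thm:delage}; it is imported as background from \citet{DelBufDibSaf-DGAA-23}, so there is no in-paper proof to compare against. That said, your argument is correct and is essentially the standard one from that reference: Lipschitz continuity of $\upsilon^*_\tau$ in $\|\cdot\|_1$ by backward induction (linearity of $R$ with coefficients bounded by $\|r\|_\infty$, $\ell_1$-non-expansiveness of $T$ because the kernel $a_\tau(u\mid o)\,p(y,z\mid x,u)$ sums to one over $(y,z,u)$ for each $(x,o)$, and the $\sup$-form saddle-point perturbation bound), followed by anchoring at $s^{\mathtt{c},\circ}_\tau\odot s^{\mathtt{m},\circ}_\tau$ and invoking Theorem~\ref{thm:wiggers}. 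Your recursion $\kappa_\tau\leq\|r\|_\infty+\gamma\kappa_{\tau+1}$ and both "structural observations" check out.

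One caveat worth flagging, though it is a defect of the theorem statement rather than of your proof: as written, the bound is claimed for \emph{any} function $g_{s^{\mathtt{c},\circ}_\tau}$ associated with the conditional family, whereas your derivation (correctly) only delivers it for a maximizer of $g\mapsto g(s^{\mathtt{m},\circ}_\tau)$ over $G_\tau$ --- indeed, taking $s_\tau=s^{\mathtt{c},\circ}_\tau\odot s^{\mathtt{m},\circ}_\tau$ shows the "any $g$" reading cannot hold unless every associated $g$ individually upper-bounds $\upsilon^*_\tau$ at its anchor, which is how \citet{DelBufDibSaf-DGAA-23} construct their upper-bound representations but is not guaranteed by the exact representation of Theorem~\ref{thm:wiggers}. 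Your choice of the maximizer is the provable and intended version.
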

The process of generalizing values across different occupancy states is hindered by imprecise approximations from using Lipschitz constants in Theorem \ref{thm:delage}, see also  Figure \ref{figure:weak:convex:function:representations} (\textbf{B}). In addition, greedy-action selection operators, \eg $\mathbb{G}^{\pb} \colon (\upsilon^{\pb}_{\tau+1}, s_\tau)\mapsto \argmax_{a^{\pb}_\tau} \min_{a^{\pw}_\tau} R(s_\tau,a_\tau) + \gamma\upsilon^{\pb}_{\tau+1}(T(s_\tau,a_\tau))$, using this non-linear value function approximation require the enumeration of exponentially many possible decision rules $a^{\pw}_\tau$. \citet{DelBufDibSaf-DGAA-23} performs a greedy-action selection using a linear program with exponentially many potential constraints. For instance, if $\upsilon^{\pb}_{\tau+1}$ is known, the following linear program, \ie $\max \{v | a^{\pb}_\tau\in A^{\pb}_\tau, v \in \mathbb{R}, v \leq  R(s_\tau,a_\tau) + \gamma\upsilon^{\pb}_{\tau+1}(T(s_\tau,a_\tau)),~ \forall a^{\pw}_\tau\in A^{\pw}_\tau \}$,  performs a greedy-action selection with $\pmb{O}(|U^{\pw}|^{|\bar{O}^{\pw}_\tau|})$ constraints where $\bar{O}^{\pw}_\tau = \{o^{\pw}_\tau| o^{\pw}_\tau\in O^{\pw}_\tau, \Pr\{o^{\pw}_\tau|s_\tau\} > 0\}$. \citet{DelBufDibSaf-DGAA-23} mitigated this burden by considering only previously experienced (stochastic) decision rules $a^{\pw}_\tau$ instead of all of them. These drawbacks nonetheless impede algorithmic efficiency and call for alternative approaches to improve efficiency.

\begin{theorem}[Adapted from \citet{cunha2023convex}]
\label{thm:cunha}
For any arbitrary $M'$, the optimal value functions $(\upsilon^*_0,\ldots, \upsilon^*_\ell)$ solutions of (\ref{eqn:bellman:optimality}) are maximum of piece-wise linear and concave functions of occupancy states, \ie there exists families  $(\mathbb{V}^*_0,\ldots,\mathbb{V}^*_\ell)$ of collections of linear functions over conditional occupancy states such that: for any stage $\tau$ 
\begin{align} 
\upsilon^*_\tau\colon s_\tau &\mapsto \max_{V_\tau\in \mathbb{V}^*_\tau} \sum_{o^{\pw}\in O^{\pw}}s^{\mathtt{m},\circ}_\tau(o^{\pw}) \min_{\alpha_\tau\in V_\tau} \alpha_\tau(s^{\mathtt{c},o^{\pw}}_\tau)
\label{eqn:thm:greedy}
\end{align}
where collection $V_\tau$ represents the optimal (piecewise-linear and concave) value function of player ${\circ}$ given that player ${\pb}$ acts according to a fixed policy.
\end{theorem}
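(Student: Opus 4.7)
My plan is to establish the result by backward induction on the stage index $\tau$, exploiting the key observation that once player ${\pb}$'s continuation policy $a^{\pb}_{\tau:}$ is fixed, player ${\circ}$ faces a finite-horizon POMDP whose sufficient statistic conditioned on her private history $o^{\pw}$ is exactly the conditional occupancy state $s^{\mathtt{c},o^{\pw}}_\tau$. This reduces the inner minimization of (\ref{eqn:bellman:optimality}) to a classical POMDP value-function question of the form resolved by \citet{smallwood,sondik78}.

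The base case $\tau = \ell$ is immediate: taking $\mathbb{V}^*_\ell = \{\{0\}\}$ makes both sides of (\ref{eqn:thm:greedy}) vanish. For the inductive step, assume $\upsilon^*_{\tau+1}$ admits representation (\ref{eqn:thm:greedy}) and fix a continuation policy $a^{\pb}_{\tau:}$ of player ${\pb}$. Against this fixed policy, the residual problem for player ${\circ}$ is a minimizing POMDP with hidden state $(x, o^{\pb})$. Unfolding its stage-$\tau$ Bellman backup and substituting the representation of $\upsilon^*_{\tau+1}$ supplied by the inductive hypothesis yields---by the standard piecewise-linearity argument with the sign reversed for minimization---a set $V_\tau(a^{\pb}_{\tau:})$ of linear functions on conditional occupancy states such that ${\circ}$'s best-response value under any private history $o^{\pw}$ equals $\min_{\alpha \in V_\tau(a^{\pb}_{\tau:})} \alpha(s^{\mathtt{c},o^{\pw}}_\tau)$. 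The $\alpha$-vectors themselves arise by pushing the linear functionals from the inductive representation of $\upsilon^*_{\tau+1}$ backward through the transition kernel $p$ and through the fixed one-step rule of ${\pb}$.

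By the law of total expectation, player ${\circ}$'s worst-case return against $a^{\pb}_{\tau:}$ at occupancy state $s_\tau$ is the marginal-weighted sum $\sum_{o^{\pw}} s^{\mathtt{m},\circ}_\tau(o^{\pw}) \min_{\alpha \in V_\tau(a^{\pb}_{\tau:})} \alpha(s^{\mathtt{c},o^{\pw}}_\tau)$. Setting $\mathbb{V}^*_\tau$ equal to the family of all such $V_\tau(a^{\pb}_{\tau:})$ and then maximizing over ${\pb}$'s continuation policies produces $\upsilon^*_\tau(s_\tau)$ in exactly the form (\ref{eqn:thm:greedy}). The equality between this outer maximum and the Bellman max-min in (\ref{eqn:bellman:optimality}) is the existence of the game value, which holds for our finite zero-sum game with behavioral strategies by a standard minimax argument.

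The main obstacle will be justifying the POMDP reduction cleanly: one must show that with ${\pb}$'s policy fixed, the multi-stage minimization over ${\circ}$'s remaining decision rules decouples, per private history, into a pointwise minimum over a \emph{single shared} collection of $\alpha$-vectors---equivalently, that an optimal ${\circ}$-response can be realized as a deterministic, history-indexed selection from that collection---and that the outer maximum is attained at a pure behavioral ${\pb}$-policy so that $\mathbb{V}^*_\tau$ need not contain more than one element per such policy. A careful treatment of how the families $\mathbb{V}^*_\tau$ propagate across the induction (in particular, how the $\alpha$-vectors assembled at stage $\tau$ inherit linearity in $s^{\mathtt{c},o^{\pw}}_\tau$ from those at stage $\tau+1$) is the delicate bookkeeping I expect to dominate the write-up.
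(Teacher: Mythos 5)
This theorem is stated in the paper as ``Adapted from \citet{cunha2023convex}'' and the paper supplies no proof of its own---it explicitly defers to that reference---so there is no in-paper argument to compare against line by line. Your sketch is the natural argument and matches the reading the paper itself gives of the result (``$V_\tau$ represents the optimal value function of player ${\circ}$ given that player ${\pb}$ acts according to a fixed policy''): fix ${\pb}$'s continuation policy, observe that ${\circ}$ then faces a POMDP with hidden state $(x,o^{\pb})$ whose belief at private history $o^{\pw}$ is exactly $s^{\mathtt{c},o^{\pw}}_\tau$, invoke Smallwood--Sondik to get a shared finite collection of $\alpha$-vectors with the min reversed, decompose by the law of total expectation over $s^{\mathtt{m},\circ}_\tau$, and maximize over ${\pb}$'s policies. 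The one step you should not treat as ``standard'': identifying that outer maximum over \emph{full continuation policies} with the \emph{stagewise} max--min recursion (\ref{eqn:bellman:optimality}) is precisely the nontrivial principle-of-optimality result for occupancy Markov games that this paper inherits from \citet{Wiggers16,DelBufDibSaf-DGAA-23}; citing it (rather than a bare minimax appeal) closes that gap, and a word on why the supremum over the continuum of ${\pb}$-policies is attained (compactness of behavioral policies plus continuity of the induced $\alpha$-vectors) would make the ``maximum'' in (\ref{eqn:thm:greedy}) legitimate. With those two points referenced, your outline is sound.
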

\begin{figure}[H]
\centering
\begin{tikzpicture}[
        scale=1.3,
        IS/.style={sthlmRed, thick},
        LM/.style={sthlmRed, thick},
        axis/.style={very thick, ->, >=stealth', line join=miter},
        important line/.style={thick}, dashed line/.style={dashed, thin},
        every node/.style={color=black},
        dot/.style={circle,fill=red,minimum size=8pt,inner sep=0pt, outer sep=-1pt},
    ]

    \draw[axis, -, line width=.3em] (-6,2.75) -- (-6,0) -- (-2.25,0) -- (-2.25,2.75);
    \node at (-6.2,-0.3) {\textbf{A}};
    \node at (-0.2,-0.3) {\textbf{B}};

    \draw[-, sthlmRed, line width=.1em] (-6,2) -- (-5.07, 2.5) -- (-3.5, 2.15) -- (-2.25, 1.25);

    \draw[-, sthlmGreen, line width=.1em] (-6,1.65) -- (-4.6, 2.75) -- (-3.25, 2.35) -- (-2.25,0);

    \draw[-, sthlmBlue, line width=.1em] (-6,0)-- (-5.5,1.5) -- (-4,2.65) -- (-2.75,2.35) -- (-2.25, .65);
 
    \node[scale=1, sthlmYellow] at (-2.75,-0.3) {$\pmb{s^{\mathtt{c},o^{\circ}}}$};
    
    \node[scale=1, sthlmPurple] at (-5.07,-0.3) {$\pmb{s^{\mathtt{c},o^{\circ}}}$};

    \node[scale=1, sthlmOrange] at (-3.67,-0.3) {$\pmb{s^{\mathtt{c},o^{\circ}}}$};

    \draw[-, dotted, sthlmYellow] (-2.75,2.75) -- (-2.75,0);
    \draw[-, dotted, sthlmYellow] (-6.1,2.35) -- (-2.25,2.35);    
    \node[rotate=90, scale=2, sthlmYellow] at (-2.75,2.35) {$\bullet$};
    \node[scale=.5, sthlmBlue] at (-6.5,2.35) {$\omega(\textcolor{sthlmYellow}{\pmb{s^{\mathtt{c},o^{\circ}}}})$};    
    \node[scale=1, sthlmYellow] at (-2.75,0) {$\bullet$};
    \node[scale=1, sthlmYellow] at (-6,2.35) {$\bullet$};

    \draw[-, dotted, sthlmOrange] (-3.67,2.75) -- (-3.67,0);
    \draw[-, dotted, sthlmOrange] (-6.1,2.58) -- (-2.25,2.58);    
    \node[rotate=90, scale=2, sthlmOrange] at (-3.67,2.58) {$\bullet$};
    \node[scale=.5, sthlmBlue] at (-6.5,2.58) {$\omega(\textcolor{sthlmOrange}{\pmb{s^{\mathtt{c},o^{\circ}}}})$};    
    \node[scale=1, sthlmOrange] at (-3.67,0) {$\bullet$};
    \node[scale=1, sthlmOrange] at (-6,2.58) {$\bullet$};

    \draw[-, dotted, sthlmPurple] (-5.07,2.75) -- (-5.07,0);
    \draw[-, dotted, sthlmPurple] (-6.1,1.83) -- (-2.25,1.83);    
    \node[rotate=90, scale=2, sthlmPurple] at (-5.07,1.83) {$\bullet$};
    \node[scale=.5, sthlmBlue] at (-6.5,1.83) {$\omega(\textcolor{sthlmPurple}{\pmb{s^{\mathtt{c},o^{\circ}}}})$};    
    \node[scale=1, sthlmPurple] at (-5.07,0) {$\bullet$};
    \node[scale=1, sthlmPurple] at (-6,1.83) {$\bullet$};

    \draw[axis, -, line width=.3em] (0,2.75) -- (0,0) -- (3.75,0) -- (3.75,2.75);

    \draw[-, sthlmGreen, line width=.1em] (3.75,2) -- (2.55,1.36015);
    \draw[-, sthlmGreen, line width=.1em, dashed] (2.55,1.36015) -- (1.2,0.6406);
    \draw[-, sthlmGreen, line width=.1em, dashed] (1.2,0.6406) -- (0,0);

    \draw[-, sthlmBlue, line width=.1em, dashed] (0,1.65)--(1.2,1.5216);     
    \draw[-, sthlmBlue, line width=.1em] (1.2,1.5216)--(2.55,1.37715);     
    \draw[-, sthlmBlue, line width=.1em, dashed] (2.55,1.37715) -- (3.75,1.25);     

     \draw[-, sthlmRed, line width=.1em] (0,2.25)--(1.2,1.53); 
    \draw[-, sthlmRed, line width=.1em, dashed] (1.2,1.53)--(2.55,0.72); 
    \draw[-, sthlmRed, line width=.1em, dashed] (2.55,0.72)--(3.75,0);

    \draw[-, dotted, sthlmBlue] (2.33,-0.1) -- (2.33,2.75);    
    \draw[-, dotted, sthlmBlue] (-.1,1.42) -- (3.75,1.42);    
    \node[scale=1, sthlmBlue] at (2.33,-0.3) {$\pmb{s}$};    
    \node[scale=.5, sthlmBlue] at (-.35,1.42) {$\omega(\textcolor{sthlmBlue}{\pmb{s}})$};    
    \node[rotate=90, scale=2, sthlmBlue] at (2.33,1.42) {$\bullet$};
    \node[rotate=90, scale=1, sthlmBlue] at (2.33,0) {$\bullet$};
    \node[rotate=90, scale=1, sthlmBlue] at (0,1.42) {$\bullet$};
\end{tikzpicture}
\caption{Generalization across occupancy states as provided by \citet{cunha2023convex}'s uniform continuity properties. Plot \textbf{A} describes generalization across all conditional occupancy states where the value function is given by a collection $\{\textcolor{sthlmBlue}{\omega}, \textcolor{sthlmRed}{\omega}, \textcolor{sthlmGreen}{\omega}\}$ of piecewise-linear and concave functions of conditional occupancy states, \cf Theorem \ref{thm:cunha}. Plot \textbf{B} describes generalization across any occupancy state $\textcolor{sthlmBlue}{\pmb{s}}$ given as a distribution over conditional occupancy states, such that value $\textcolor{sthlmBlue}{\omega(\textcolor{sthlmBlue}{\pmb{s}})}$ given by $ \textcolor{sthlmOrange}{\pmb{s^{\mathtt{m},\circ}(o^{\circ})}} \cdot \textcolor{sthlmBlue}{\omega(\textcolor{sthlmOrange}{\pmb{s^{\mathtt{c},o^{\circ}}}})} + \textcolor{sthlmPurple}{\pmb{s^{\mathtt{m},\circ}(o^{\circ})}} \cdot\textcolor{sthlmBlue}{\omega(\textcolor{sthlmPurple}{\pmb{s^{\mathtt{c},o^{\circ}}}})} + \textcolor{sthlmYellow}{\pmb{s^{\mathtt{m},\circ}(o^{\circ})}} \cdot\textcolor{sthlmBlue}{\omega(\textcolor{sthlmYellow}{\pmb{s^{\mathtt{c},o^{\circ}}}})}$ is also a convex combinations of values from conditional occupancy states, \cf Theorem \ref{thm:cunha:convex}. In this form, the piece-wise linear and concave functions of conditional occupancy states become linear functions. \textbf{Best viewed in color}.}
\label{figure:convex:function:representations}
\end{figure}  

Figure \ref{figure:convex:function:representations} (\textbf{A}) showcases piece-wise linear and concave functions across conditional occupancy states, as described by Theorem \ref{thm:cunha}. It should be noted that a change of basis effectively reveals the convexity property of the optimal value function across occupancy states. It will interest the reader to know that each piece-wise linear and concave function lower-bounds the optimal value function, \ie $\upsilon^*_\tau(s_\tau) \geq \sum_{o^{\pw}\in O^{\pw}}s^{\mathtt{m},\circ}_\tau(o^{\pw}) \min_{\alpha_\tau\in V_\tau} \alpha_\tau(s^{\mathtt{c},o^{\pw}}_\tau)$ for any collection of  linear functions $V_\tau\in \mathbb{V}^*_\tau$. So, as we populate a family $\mathbb{V}_\tau \subseteq \mathbb{V}^*_\tau$, the induced values lower-bound the optimal values. 

\begin{theorem}[Adapted from \citet{cunha2023convex}]
\label{thm:cunha:convex}
For any arbitrary $M'$, the optimal value functions $(\upsilon^*_0,\ldots, \upsilon^*_\ell)$ solutions of (\ref{eqn:bellman:optimality}) are maximum of linear functions of occupancy states, when occupancy states are expressed as distributions over conditional occupancy states, \ie there exists collections  $(\mathbb{W}^*_0,\ldots,\mathbb{W}^*_\ell)$ of  linear functions over occupancy states such that: for any stage $\tau$ 
\begin{align} 
\upsilon^*_\tau\colon s_\tau &\textstyle\mapsto \max_{\omega_\tau\in \mathbb{W}^*_\tau}  \omega_\tau(s_\tau).
\label{eqn:thm:greedy:convex}
\end{align}
\end{theorem}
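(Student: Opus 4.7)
The plan is to derive Theorem \ref{thm:cunha:convex} as an almost direct corollary of Theorem \ref{thm:cunha} via a change of representation of the occupancy state, combined with the linearity of expectation. Concretely, I will identify each occupancy state $s_\tau$ with the discrete probability measure $\mu_{s_\tau}$ on the space of conditional occupancy states that places mass $s^{\mathtt{m},\circ}_\tau(o^{\pw})$ on the pair $(o^{\pw}, s^{\mathtt{c},o^{\pw}}_\tau)$ for each $o^{\pw}\in O^{\pw}_\tau$ with positive marginal probability. This correspondence is a bijection onto its image, so working with $\mu_{s_\tau}$ in place of $s_\tau$ loses no information. Linearity in this proposal will then be understood with respect to this measure representation, i.e., $\omega_\tau$ is linear if $\omega_\tau(\mu) = \mathbb{E}_{(o^{\pw},s^{\mathtt{c}})\sim \mu}[f(o^{\pw},s^{\mathtt{c}})]$ for some function $f$.

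Next, for each collection $V_\tau\in\mathbb{V}^*_\tau$, I would define
\[
g_{V_\tau}(s^{\mathtt{c}}) \;\triangleq\; \min_{\alpha_\tau\in V_\tau}\alpha_\tau(s^{\mathtt{c}}),
\]
which is a (possibly nonlinear, concave, piecewise-linear) function from conditional occupancy states to $\mathbb{R}$, but crucially this is just a scalar-valued function on the conditional-state space. Define the candidate linear functional
\[
\omega_{V_\tau}(\mu) \;\triangleq\; \mathbb{E}_{(o^{\pw},s^{\mathtt{c}})\sim\mu}\!\left[g_{V_\tau}(s^{\mathtt{c}})\right],
\]
and set $\mathbb{W}^*_\tau \triangleq \{\omega_{V_\tau} : V_\tau\in\mathbb{V}^*_\tau\}$. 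Because expectation is linear in the underlying measure, each $\omega_{V_\tau}$ is linear in $\mu$, hence linear in the occupancy state in the sense of the theorem. Evaluating at $\mu_{s_\tau}$ yields
\[
\omega_{V_\tau}(\mu_{s_\tau}) \;=\; \sum_{o^{\pw}\in O^{\pw}}s^{\mathtt{m},\circ}_\tau(o^{\pw})\min_{\alpha_\tau\in V_\tau}\alpha_\tau(s^{\mathtt{c},o^{\pw}}_\tau),
\]
which is exactly the quantity appearing inside the outer maximum of (\ref{eqn:thm:greedy}). Taking $\max_{V_\tau\in\mathbb{V}^*_\tau}$ of both sides and invoking Theorem \ref{thm:cunha} gives $\upsilon^*_\tau(s_\tau)=\max_{\omega_\tau\in\mathbb{W}^*_\tau}\omega_\tau(s_\tau)$, which is the desired identity (\ref{eqn:thm:greedy:convex}).

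The conceptual obstacle, and the only real content of the argument, lies in justifying the change of representation: why it is legitimate to declare $\omega_{V_\tau}$ \emph{linear} when, as a function of $s_\tau$ in the original $(X\times O_\tau)$-simplex representation, it is a sum of concave piecewise-linear pieces (hence only concave). The resolution is that in the new representation we no longer view conditional occupancy states as being built by normalizing, but rather as atoms carrying the labels $o^{\pw}$; so what was a nonlinear normalization step becomes implicit in the choice of base points of the measure, and the remaining averaging step is manifestly linear. I would devote the opening paragraph of the proof to making this explicit, in particular verifying that the map $s_\tau\mapsto \mu_{s_\tau}$ is well-defined, invertible on its image, and preserves the relevant convex structure; after that verification, the computation above finishes the proof in a few lines.
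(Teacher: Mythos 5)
Your proposal is correct and is essentially the argument the paper intends: the paper itself defers the proof of Theorem \ref{thm:cunha:convex} to \citet{cunha2023convex} and only sketches the idea (see the caption of Figure \ref{figure:convex:function:representations}, which states that ``in this form, the piece-wise linear and concave functions of conditional occupancy states become linear functions''), and your construction of $\omega_{V_\tau}$ as the expectation of $g_{V_\tau}=\min_{\alpha_\tau\in V_\tau}\alpha_\tau(\cdot)$ under the atomic measure $\mu_{s_\tau}$ is exactly that change of representation made rigorous. Your explicit attention to why ``linear'' must be read with respect to the measure representation (and not the original $(X\times O_\tau)$-simplex, where the map is merely concave) is the right point to emphasize and is consistent with the theorem's qualifier ``when occupancy states are expressed as distributions over conditional occupancy states.''
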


\citet{cunha2023convex} presents a detailed proof of Theorem \ref{thm:cunha}. Although the property of uniform continuity it establishes is stronger than its predecessors, little is known about the maintenance of value function representations and the eventual discovery of optimal value functions. \citet{cunha2023convex} also demonstrated that when occupancy states are expressed as distributions over conditional occupancy states, the optimal value function is convex over occupancy states, \cf Theorem \ref{thm:cunha:convex}.  In this paper, we investigate the following question.

\tikzstyle{mybox} = [draw=black, very thick, rectangle, rounded corners, inner ysep=5pt, inner xsep=5pt]
\vspace{2pt}
\begin{tikzpicture}
\node [mybox] (box){
\begin{minipage}{.96\linewidth}
\quad\emph{How can we define efficient point-based operators $\mathbb{G}^{\pb}$ and $\mathbb{H}^{\pb}$ (resp. $\mathbb{G}^{\pw}$ and $\mathbb{H}^{\pw}$) of value functions represented using Theorem \ref{thm:cunha}, while ensuring the identification of an $\epsilon$-optimal joint policy for occupancy Markov game $M'$ (resp. $M$)?}
\end{minipage}
};
\end{tikzpicture}

\section{Exploiting Uniform Continuity}
This section uses the value function representation introduced in Theorem \ref{thm:cunha} to define the operators $\mathbb{G}^{\pb}$ and $\mathbb{H}^{\pb}$ (resp. $\mathbb{G}^{\pw}$ and $\mathbb{H}^{\pw}$) used to optimally solve $M'$. The main result of this section establishes the formal proof that $\mathbb{G}^{\pb} \colon (\upsilon^{\pb}_{\tau+1}, s_\tau)\mapsto a^{\pb}_\tau$ are solutions of linear programs with a polynomial number of constraints. We shall draw our attention to point-based operators, \ie operators build upon a sample set of occupancy states. The resulting value functions are built only upon this sample set of occupancy states, yet they generalize over the entire occupancy space thanks to the uniform continuity property stated in Theorem \ref{thm:cunha} from \citep{cunha2023convex}.

\begin{restatable}[]{thm}{thmgreedy}[Proof in Appendix \ref{proofthmgreedy}]
\label{thm:greedy}
Let $s_\tau$ be an occupancy state and  $\upsilon^{\pb}_{\tau+1}$ be the value function of the player ${\pb}$ at stage $\tau+1$.
The greedy decision rule $\mathbb{G}^{\pb}(s_\tau,\upsilon^{\pb}_{\tau+1})$ of the player ${\pb}$, \ie $\argmax_{a^{\pb}_\tau} \min_{a^{\pw}_\tau} R(s_\tau,a_\tau) + \gamma\upsilon^{\pb}_{\tau+1}(T(s_\tau,a_\tau))$ is the solution of the linear program in Figure \ref{fig:greedy:lp}. 
\end{restatable}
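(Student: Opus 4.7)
The plan is to unfold the definition of the greedy operator, substitute the representation of $\upsilon^{\pb}_{\tau+1}$ from Theorem \ref{thm:cunha}, and then exploit the fact that the min-over-$\alpha$ in that representation lives \emph{inside} the sum over private histories $o^{\pw}_{\tau+1}$. That per-history decomposition is precisely the feature that was missing in the Lipschitz representation of Theorem \ref{thm:delage}, and it is what will turn an exponential constraint count into a polynomial one.

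First, I would write
\[
\mathbb{G}^{\pb}(s_\tau,\upsilon^{\pb}_{\tau+1}) = \argmax_{a^{\pb}_\tau} \min_{a^{\pw}_\tau}\Big[ R(s_\tau,a_\tau) + \gamma\,\upsilon^{\pb}_{\tau+1}(T(s_\tau,a_\tau))\Big]
\]
and substitute $\upsilon^{\pb}_{\tau+1}(s) = \max_{V\in\mathbb{V}^{\pb}_{\tau+1}} \sum_{o^{\pw}} s^{\mathtt{m},\circ}(o^{\pw})\, \min_{\alpha\in V} \alpha(s^{\mathtt{c},o^{\pw}})$. Next, using the transition rule $T$ to expand $s^{\mathtt{m},\circ}_{\tau+1}(o^{\pw}_\tau,u^{\pw}_\tau,z^{\pw}_{\tau+1})$, I would isolate the factor $a^{\pw}_\tau(u^{\pw}_\tau\mid o^{\pw}_\tau)$; crucially, the conditional occupancy state $s^{\mathtt{c},o^{\pw}_{\tau+1}}_{\tau+1}$ depends only on $s_\tau$ and on $a^{\pb}_\tau$, not on $a^{\pw}_\tau$, so the $\alpha$-values $\alpha(s^{\mathtt{c},o^{\pw}_{\tau+1}}_{\tau+1})$ become data of the LP rather than coupled unknowns.

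Then I would fix a candidate $V\in\mathbb{V}^{\pb}_{\tau+1}$ (the outer $\max_V$ can be handled by solving one LP per $V$ and taking the best, or equivalently by adding $V$ as an indexed choice). For fixed $V$, since the coefficients $a^{\pw}_\tau(\cdot\mid o^{\pw}_\tau)$ for different histories $o^{\pw}_\tau$ are independent simplex variables, the inner $\min_{a^{\pw}_\tau}$ decomposes into a sum of independent minimizations per $o^{\pw}_\tau$. This will let me introduce one auxiliary variable $v_{o^{\pw}_\tau}$ per history and one constraint per triple $(o^{\pw}_\tau,u^{\pw}_\tau,\alpha)$, bounding $v_{o^{\pw}_\tau}$ from above by the contribution obtained when the opponent plays $u^{\pw}_\tau$ at $o^{\pw}_\tau$ and the future $\alpha\in V$ is chosen. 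The overall LP then maximizes $v = R(s_\tau,a_\tau) + \gamma \sum_{o^{\pw}_\tau} v_{o^{\pw}_\tau}$ subject to simplex constraints on $a^{\pb}_\tau(\cdot\mid o^{\pb}_\tau)$ and the above inequalities, with $O(|\bar{O}^{\pw}_\tau|\cdot|U^{\pw}|\cdot|V|)$ constraints in total — an exponential drop compared to the $|U^{\pw}|^{|\bar{O}^{\pw}_\tau|}$ of \citet{DelBufDibSaf-DGAA-23}.

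The main obstacle will be verifying that this decomposition is tight, i.e., that the optimal LP value coincides with $\min_{a^{\pw}_\tau}[\cdot]$ and not merely a lower bound. Concretely, I need to show that one can reconstruct a feasible $a^{\pw}_\tau$ achieving the LP value from the per-history decisions, which in turn requires that each $a^{\pw}_\tau(\cdot\mid o^{\pw}_\tau)$ is only weighted by the non-negative probability $\Pr\{o^{\pw}_\tau\mid s_\tau, a^{\pb}_\tau\}$ and that no cross-history coupling was silently dropped. This tightness argument is the heart of the proof: it relies squarely on the placement of $\min_{\alpha\in V}$ \emph{inside} the $\sum_{o^{\pw}}$ in Theorem \ref{thm:cunha}, which replaces the single global $\alpha$-choice of prior representations by an independent choice per history. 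Once tightness is established, the stated LP in Figure \ref{fig:greedy:lp} follows by collecting the objective, simplex constraints, and the per-$(o^{\pw}_\tau,u^{\pw}_\tau,\alpha)$ inequalities.
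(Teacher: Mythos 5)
Your setup is right---substituting the representation from Theorem \ref{thm:cunha}, factoring the transition so that $a^{\pw}_\tau(u^{\pw}\mid o^{\pw})$ appears linearly, and decomposing the opponent's best response per history $o^{\pw}_\tau$---and this matches the first half of the paper's argument. But your treatment of the outer $\max_{V}$ is where the proof breaks. The quantity to compute is $\max_{a^{\pb}}\min_{a^{\pw}}\bigl[R(s_\tau,a_\tau)+\gamma\max_{V}(\cdots)_V\bigr]$, with $\max_V$ sitting \emph{inside} the $\min_{a^{\pw}}$. Solving one LP per fixed $V$ and taking the best computes $\max_{V}\max_{a^{\pb}}\min_{a^{\pw}}[\cdots]_V$, which by the weak minimax inequality is only a lower bound on $\max_{a^{\pb}}\min_{a^{\pw}}\max_{V}[\cdots]_V$; the returned $a^{\pb}_\tau$ need not be greedy, because the maximizing player may have to randomize over collections $V$ to secure the value against the worst-case $a^{\pw}_\tau$. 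The paper closes exactly this gap in three moves that are absent from your plan: (i) relax $\max_{V\in\mathbb{V}}$ to $\max_{\xi\in\triangle(\mathbb{V})}$ over distributions on collections; (ii) observe that the payoff is bilinear in $(a^{\pw},\xi)$ (the quantities $\sum_{z^{\pw}}\min_{\alpha\in V}q_V^{\alpha}$ being constants in that pair), so Von Neumann's minimax theorem permits swapping $\min_{a^{\pw}}$ with $\max_{\xi}$ losslessly; and (iii) linearize the resulting product of decision variables via $\theta(V,u^{\pb}\mid o^{\pb})=\xi(V)\cdot a^{\pb}(u^{\pb}\mid o^{\pb})$---which is precisely why the LP of Figure \ref{fig:greedy:lp} carries variables indexed jointly by $V$ and $u^{\pb}$. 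Your alternative phrase ``adding $V$ as an indexed choice'' gestures at this but does not supply the randomization over $V$ or the minimax swap that make it correct.

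A secondary remark: the tightness question you single out (reconstructing a feasible $a^{\pw}_\tau$ from per-history decisions) is the easy part. Once $V$ is handled through $\xi$, the opponent's objective is a non-negatively weighted sum of terms that are independent across histories $o^{\pw}_\tau$, so its minimization decomposes exactly and Wald's maximin construction converts it into the $\alpha_\theta(o^{\pw})$ variables and the per-$(o^{\pw},u^{\pw})$ constraints of the LP. The genuinely delicate exchange is between $\min_{a^{\pw}_\tau}$ and the choice of $V$ (and of $\alpha\in V$), and that is the step your proposal leaves unaddressed.
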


\begin{figure}[H]
\vspace{-2pt}
   \scriptsize
\tikzstyle{mybox} = [draw=black, very thick, rectangle, rounded corners, inner ysep=5pt, inner xsep=5pt]
\vspace{-4pt}
\begin{tikzpicture}
\node [mybox] (box){
\begin{minipage}{.95\linewidth}
\vspace{-12pt}
\begin{align*}
&\textbf{Maximize}\quad \textstyle \sum_{o^{\pw}\in O^{\pw}}  \textcolor{sthlmRed}{\alpha_\theta(o^{\pw})} \\
&\textbf{Subject to }\quad \\
&\quad\textstyle  \textcolor{sthlmRed}{\alpha_\theta(o^{\pw})} \leq   \sum_{V}\sum_{z^{\pw}} \textcolor{sthlmRed}{ \beta_{V}(o^{\pw}u^{\pw}z^{\pw})},~  \forall  \textcolor{sthlmGreen}{o^{\pw}\in O^{\pw}, u^{\pw}\in U^{\pw}}\\
&\quad\textstyle \textcolor{sthlmRed}{\beta_{V}(o^{\pw}u^{\pw}z^{\pw})} \leq    \sum_{o^{\pb}\in O^{\pb}} \sum_{u^{\pb}\in U^{\pb}}   g_{V}^{\alpha}(o,u,z^{\pw}) \cdot \textcolor{sthlmRed}{\theta(V,u^{\pb}|o^{\pb})},~ \forall \textcolor{sthlmGreen}{V \in \mathbb{V}, \alpha \in V, o^{\pw} \in O^{\pw}, u^{\pw} \in U^{\pw}, z^{\pw} \in Z^{\pw}} \\
&\quad\textstyle \sum_{V} \sum_{u^{\pb}} \textcolor{sthlmRed}{\theta(V,u^{\pb}|o^{\pb})} = 1,~ \forall \textcolor{sthlmGreen}{o^{\pb}\in O^{\pb}}\\
&\textbf{Variables }\quad \\
 &\quad \textcolor{sthlmRed}{\alpha_\theta(o^{\pw})} \in \mathbb{R},~ \forall o^{\pw}\in O^{\pw}\\
 &\quad \textcolor{sthlmRed}{\theta(V,u^{\pb}|o^{\pb})}\in [0,1],~ \forall V\in \mathbb{V}, u^{\pb}\in U^{\pb}, o^{\pb}\in O^{\pb}\\
 &\quad \textcolor{sthlmRed}{ \beta_{V}(o^{\pw}u^{\pw}z^{\pw})}\in \mathbb{R}, ~ \forall  V\in \mathbb{V}, o^{\pw}\in O^{\pw}, u^{\pw}\in U^{\pw}, z^{\pw}\in Z^{\pw}.
\end{align*}
\end{minipage}
};
\end{tikzpicture}
\vspace{-.15cm}
\caption{The linear program for the selection of greedy decision rule $\mathbb{G}^{\pb}(s_\tau,\upsilon^{\pb}_{\tau+1})$,  with function    $g_{V}^{\alpha} (o,u,z^{\pw}) =   \sum_x  s_\tau(x,o)  \sum_{y,z^{\pb}}p_{xy}^{uz}\cdot ( \frac{r_{xu}}{|Z^{\pw}|} + \gamma\alpha(y,o^{\pb}u^{\pb}z^{\pb})) $.  
The \textcolor{sthlmRed}{\bf red} quantities are variables; \textcolor{sthlmGreen}{\bf green} ones are constraint identifiers, and black ones are constants. \textbf{Best viewed in color}.}
\label{fig:greedy:lp}
\end{figure}

{
\begin{figure}[H]
\vspace{-2pt}
    \centering
   \scriptsize
\tikzstyle{mybox} = [draw=black, very thick, rectangle, rounded corners, inner ysep=5pt, inner xsep=5pt]
\vspace{-4pt}
\begin{tikzpicture}
\node [mybox] (box){
\begin{minipage}{.96\linewidth}
\vspace{-2pt}
\begin{align}
&\textstyle V_{\textcolor{sthlmRed}{\theta}} \doteq \{w^{\bar{s}_\tau,o^{\pw}} | \bar{s}_\tau\in \bar{S}_\tau, o^{\pw} \in \bar{O}^{\pw}_\tau(\bar{s}_\tau)\}
\\
&\textstyle w^{s_\tau,o^{\pw}} (x, o^\pb) = \argmin_{w^{s_\tau,o^{\pw},u^{\pw}}} \sum_{x,o} s_\tau(x,o) \cdot w_{V,\alpha}^{s_\tau,o^{\pw}u^{\pw}z^{\pw}}(x,o^{\pb}),  \forall \textcolor{sthlmGreen}{o^{\pw}\in \bar{O}^{\pw}_\tau}
\\
&\textstyle w^{s_\tau,o^{\pw}, u^\pw} (x,o^{\pb}) = \sum_{V \in \mathbb{V}} \sum_{z^\pw} w_V^{s_\tau,o^{\pw}u^{\pw}z^{\pw}} (x,o^\pb),  \forall \textcolor{sthlmGreen}{o^{\pw}\in \bar{O}^{\pw}_\tau, u^{\pw}\in U^{\pw}}
\\
&\textstyle w_V^{s_\tau,o^{\pw}u^{\pw}z^{\pw}} (x,o^{\pb}) = \argmin_{w_{V,\alpha}^{o^{\pw}u^{\pw}z^{\pw}}} \sum_{x,o}s_\tau(x,o) \cdot w_{V,\alpha}^{o^{\pw}u^{\pw}z^{\pw}}(x,o^{\pb}),  \forall \textcolor{sthlmGreen}{V\in \mathbb{V}, o^{\pw}\in \bar{O}^{\pw}_\tau, u^{\pw}\in U^{\pw}, z^{\pw}\in Z^{\pw}}
\\
&w_{V,\alpha}^{o^{\pw}u^{\pw}z^{\pw}} (x,o^{\pb}) =\!\!\! \sum_{u^{\pb},z^{\pb}}\sum_y \textcolor{sthlmRed}{\theta(V,u^{\pb}|o^{\pb})}   p_{xy}^{uz}(\frac{r_{xu}}{|Z^{\pw}|} + \gamma\alpha(y,o^{\pb}u^{\pb}z^{\pb})), \forall \textcolor{sthlmGreen}{V \in \mathbb{V},\alpha\in V, o^{\pw}\in \bar{O}^{\pw}_\tau, u^{\pw}\in U^{\pw}, z^{\pw}\in Z^{\pw}}
\end{align}
\end{minipage}
};
\end{tikzpicture}
\vspace{-6pt}
\caption{Computing $V_{\textcolor{sthlmRed}{\theta}}$ from the solution $\textcolor{sthlmRed}{\theta}$ of the linear program of Figure \ref{fig:greedy:lp}. \textbf{Best viewed in color}.}
\label{fig:greedy:lp:retrival}
\end{figure}
}

Theorem \ref{thm:greedy} specifies linear programs necessary to implement the point-based greedy-action selection operator $\mathbb{G}^{\pb}$. It is worth noticing that the number of constraints in these linear programs is polynomial in the size of the value function $\upsilon^{\pb}_{\tau+1}\colon s_{\tau+1}\mapsto \max_{V\in \mathbb{V}_{\tau+1}} \sum_{o^{\pw}} s_{\tau+1}^{\mathtt{m},\circ}(o^{\pw}) \min_{\alpha\in V} \alpha(s_{\tau+1}^{\mathtt{c},o^{\pw}})$. To better understand this, if we let $\bar{O}_\tau^{\pw}(s_\tau) = \{o^{\pw} | o^{\pw}\in O^{\pw}_\tau,s_\tau^{\mathtt{m},\circ}(o^{\pw})>0 \}$ and $\bar{V} \in \argmax_{V\in \mathbb{V}_{\tau+1}} |V|$ then the linear program in Figure \ref{fig:greedy:lp} involves about $\pmb{O}(|\mathbb{V}_{\tau+1}||\bar{V}||\bar{O}_\tau^{\pw}(s_\tau)||U^{\pw}||Z^{\pw}|)$ constraints.  
After defining how to select the greedy decision rules for either player, resulting in a joint greedy decision rule, we now present an implementation of the \citeauthor{bellman}'s update operator for player $\pb$ $\mathbb{H}^{\pb}\colon (\upsilon^{\pb}_{\tau+1}, s_\tau, \mathbb{G}^{\pb} (\upsilon^{\pb}_{\tau+1}, s_\tau)) \mapsto \upsilon^{\pb}_\tau$. A point-based update operator of a family of collections of linear functions is sound if it improves the current value function at least in one occupancy state but it overestimates the optimal value function nowhere.

\begin{restatable}[]{thm}{thmsimstomgbellmanoptimalityeqn}[Proof in Appendix \ref{proofthmsimstomgbellmanoptimalityeqn}]
\label{thmsimstomgbellmanoptimalityeqn}
Let  $s_\tau$ be an occupancy state, $\upsilon^{\pb}_\tau$ and $\upsilon^{\pb}_{\tau+1}$ be value functions of the player ${\pb}$ given by $\mathbb{V}_\tau$ and $\mathbb{V}_{\tau+1}$, respectively. If we let $\theta  = \mathbb{G}^{\pb} (\upsilon^{\pb}_{\tau+1}, s_\tau)$ be the solution of the greedy operator at occupancy state $s_\tau$ for player ${\pb}$, and $V_{\theta}$ be the collection of linear functions computed as in Figure \ref{fig:greedy:lp:retrival}, then  the updated value function $\mathbb{H}^{\pb} (\upsilon^{\pb}_{\tau+1}, s_\tau, \theta)$ at stage $\tau$, given by $\mathbb{V}_\tau = \mathbb{V}_\tau \cup \{V_{\theta}\}$, is sound.
\end{restatable}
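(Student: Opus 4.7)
The proof will decompose soundness into two independent checks: (i) the collection $V_\theta$ realizes the full Bellman backup value at $s_\tau$, so $\mathbb{V}_\tau \cup \{V_\theta\}$ improves the current value representation there (since adding a new collection can only raise the outer maximum in the representation from Theorem \ref{thm:cunha}); and (ii) $V_\theta$ pointwise under-approximates $\upsilon^*_\tau$ on the entire occupancy space, so no overestimation is introduced anywhere else.

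For (i), the plan is to unfold the recursive definitions of Figure \ref{fig:greedy:lp:retrival} from the innermost line outward, evaluating the weighted sum $\sum_{x,o^\pb} s_\tau(x,(o^\pw,o^\pb)) \cdot w^{s_\tau,o^\pw}(x,o^\pb)$ and using the fact that each $\argmin$ in the retrieval is taken under the $s_\tau(x,o)$ weighting. After the successive substitutions, one recovers exactly the quantity $\sum_{o^\pw} \alpha_\theta(o^\pw)$ that appears as the LP objective of Figure \ref{fig:greedy:lp}, with its $\min$-constraints saturated at the $\argmin$ choices. Theorem \ref{thm:greedy} then identifies this LP value with $\max_{a^\pb_\tau}\min_{a^\pw_\tau} R(s_\tau,a_\tau) + \gamma \upsilon^\pb_{\tau+1}(T(s_\tau,a_\tau))$, which is precisely the Bellman backup, completing the verification of (i).

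For (ii), the strategy is to re-read $V_\theta$ game-theoretically. The variable $\theta(V,u^\pb \mid o^\pb)$ encodes a behavioral commitment of player ${\pb}$ to a joint distribution over an immediate action $u^\pb$ and a continuation value collection $V \in \mathbb{V}_{\tau+1}$. Conditional on this fixed commitment, player ${\circ}$ faces a best-response problem whose optimal value function, by Theorem \ref{thm:cunha} applied to the induced one-sided subgame, is exactly the piecewise-linear and concave function $s \mapsto \sum_{o^\pw} s^{\mathtt{m},\circ}(o^\pw)\min_{\alpha\in V_\theta}\alpha(s^{\mathtt{c},o^\pw})$ built from $V_\theta$. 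Because committing ${\pb}$ to any fixed policy can only weakly lower its payoff relative to the minimax value, this induced function lies below $\upsilon^*_\tau$ at every occupancy state, establishing (ii).

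The main obstacle is (i): the retrieval of Figure \ref{fig:greedy:lp:retrival} nests an $\argmin$ over $\alpha\in V$ and an $\argmin$ over $u^\pw$ at different levels of the recursion, and these local per-$(o^\pw,u^\pw,z^\pw)$-triple argmins must be shown to coincide with the \emph{global} minima appearing in $\upsilon^\pb_\tau(s_\tau) = \max_V \sum_{o^\pw} s_\tau^{\mathtt{m},\circ}(o^\pw)\min_{\alpha} \alpha(s_\tau^{\mathtt{c},o^\pw})$. The key algebraic observation to invoke is that each $\argmin$ in the retrieval is weighted by $s_\tau(x,o)$ itself, so the pointwise $(x,o^\pb)$-argmins aggregate correctly with the outer marginal-weighted $\min$ over conditional occupancy states once one re-partitions the sum $\sum_{x,o}$ as $\sum_{o^\pw} s_\tau^{\mathtt{m},\circ}(o^\pw) \sum_{x,o^\pb} s_\tau^{\mathtt{c},o^\pw}(x,o^\pb)$. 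Once this alignment is in place, the LP objective and the backed-up value coincide exactly at $s_\tau$, while the lower-bound property in (ii) extends to all other occupancy states through the game-theoretic reinterpretation of $\theta$.
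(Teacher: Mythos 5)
Your proposal is correct and follows essentially the same route as the paper's proof: the paper likewise splits soundness into the two conditions that the new collection attains the Bellman backup at $s_\tau$ (which it discharges by appeal to Theorem~\ref{thm:greedy}, as you do in (i)) and that at any other occupancy state $\bar{s}_\tau$ the value induced by $V_\theta$ cannot exceed what a backup at $\bar{s}_\tau$ itself would produce, argued exactly via your observation that $\theta$ commits player ${\pb}$ to a policy not optimized for $\bar{s}_\tau$, so the resulting best-response value of player ${\circ}$ can only lower-bound the maximized one. Your unfolding of Figure~\ref{fig:greedy:lp:retrival} in (i) and the explicit game-theoretic reading of $\theta$ in (ii) are more detailed than the paper's terse treatment, but they are the same argument.
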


Notice that as the updates proceed families of collections $\mathbb{V}_{0:} = (\mathbb{V}_0, \mathbb{V}_1, \ldots, \mathbb{V}_{\ell-1})$ may become cumbersome. For this reason, we investigate pruning techniques tailored for families of collections.
A collection of linear functions $V_\tau$ is said to be dominated by collections in the family of collections $\mathbb{V}_\tau$ if its removal does not affect the value function $\upsilon_\tau\colon s_\tau \mapsto \max_{V_\tau\in \mathbb{V}_\tau} \sum_{o^{\pw}\in O^{\pw}} s_\tau^{\mathtt{m},\pw}(o^{\pw}) \min_{\alpha_\tau\in V_\tau} \alpha_\tau(s_\tau^{\mathtt{c},o^{\pw}})$. We rely on bounded pruning, \cf Algorithm \ref{pruning:zsposg} in Appendix, to remove point-based dominated collections in a given family and a sample set of occupancy states. To somewhat mitigate the growth of the families of collections, we also prune occupancy states that support no collections.

\section{Point-Based Value Iteration}

This section discusses how the point-based value-iteration (PBVI) algorithm \citep{pineau2003point} can be adapted to compute an $\epsilon$-optimal joint policy for $M'$ (or $M$) starting from an initial state distribution $s_0$, for a planning horizon $\ell$. The PBVI algorithm was chosen because it is guaranteed to find near-optimal solutions asymptotically. It is worth noting that our update operators are not designed to explore the occupancy-state space optimistically. This is because our uniform continuity property is intended to represent only pessimistic value functions. Therefore, algorithms such as heuristic search value iteration \citep{SmithS06}, which account for optimistic explorations, cannot be used. 

\begin{wrapfigure}{r}{0.41\textwidth} 
    \begin{minipage}{\linewidth}
        \begin{algorithm}[H]
        \caption{PBVI for $M'$.}
        \label{pbvi:zsposg}
        \begin{algorithmic}
            \STATE ${\mathtt{function}~ \mathtt{PBVI}()}$
            \STATE Initialize $\bar{S}_{0:}$, $\mathbb{V}_{0:}$ and $\upsilon^{\pb}_{0:}$.
            \WHILE{has not converged}
            \STATE $\mathtt{improve}(\mathbb{V}_{0:})$.
            \STATE $\bar{S}_{0:} \gets \mathtt{expand}(\bar{S}_{0:})$.
            \ENDWHILE
            \STATE ${\mathtt{function}~ \mathtt{improve}()}$
            \FOR{$\tau=\ell-1$ to $0$}
            \FOR{$s_\tau \in \bar{S}_\tau$}
            \STATE $\mathbb{V}_\tau \gets \mathbb{V}_\tau  \cup \{V_{\mathbb{G}^{\pb} (\upsilon^{\pb}_{\tau+1}, s_\tau)} \}$.
            \ENDFOR
            \ENDFOR
        \end{algorithmic}
        \end{algorithm}
    \end{minipage}
\end{wrapfigure}

The PBVI algorithm, \cf Algorithm \ref{pbvi:zsposg}, has two main parts for solving $M'$ (or $M$). First, it uses a finite and reachable occupancy-state subset $\bar{S}_\tau$ to bound the size of the value functions for either player at each stage $\tau$ of the game. Next, it optimizes the value functions represented in a collection $\mathbb{V}_\tau$ at each stage $\tau$ using point-based backups. This means backups are executed in no particular order, and each iteration traverses occupancy-state subsets bottom-up. The process repeats until convergence when the difference between $\upsilon^{\pb}_0(s_0)$ and $\upsilon^{\pw}_0(s_0)$ is less than or equal to a certain threshold ($\epsilon$), or until a budget (such as CPU time, memory, or the number of iterations) has been exhausted. The algorithm adds supplemental points into occupancy subsets to improve the value functions further. It selects candidate points using a portfolio of exploration strategies such as random explorations and greedy approaches with respect to the underlying (PO)MDP value functions. For each stage $\tau$, the algorithm only adds candidate points beyond a certain distance from the occupancy subset $\bar{S}_\tau$ to create a new occupancy-state set $\bar{S}_{\tau+1}$. 
For any arbitrary occupancy-state subsets $\bar{S}_{0:}$,  PBVI  produces value $\upsilon^{\pb}_0(s_0)$.  The error between $\upsilon^{\pb}_0(s_0)$ and $\upsilon^*_{0}(s_0)$ is bounded. The bound depends on how $\bar{S}_{0:}$ samples the entire occupancy-state space; with denser sampling, the estimate $\upsilon^{\pb}_0(s_0)$ converges to $\upsilon^*_0(s_0)$. The remainder of this section states and proves our error bound.

Define the density $\delta_{ \bar{S}_{0:}}$ to be the maximum distance from any legal occupancy state to subsets $\bar{S}_{0:}$. More precisely, $\delta_{ \bar{S}_{0:}} \doteq \max_{\tau\in \llbracket 0:\ell-1\rrbracket}\max_{s\in S_\tau}\min_{s'\in \bar{S}_\tau} \|s-s'\|_1$. Define a positive scalar $c$ such that $\|r(\cdot,\cdot)\|_\infty \leq c$.

\begin{restatable}[]{thm}{thmerrorbound}[Proof in Appendix \ref{proofthmerrorbound}]
\label{thmerrorbound}
For any occupancy subsets $\bar{S}_{0:}$, the error of the PBVI algorithm is bounded by $\upsilon^*_0(s_0) - \upsilon^{\textcolor{gray}{\bullet}}_0(s_0) \leq  2c \delta_{ \bar{S}_{0:}} \frac{1+\ell \gamma^{\ell+1}-(\ell+1) \gamma^\ell }{(1-\gamma)^2}$.
\end{restatable}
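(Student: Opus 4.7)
The overall strategy is the standard point-based value-iteration argument adapted to the occupancy Markov game: I would carry out a backward induction on the per-stage error
$\epsilon_\tau \doteq \sup_{s\in S_\tau}\bigl[\upsilon^*_\tau(s)-\upsilon^{\pb}_\tau(s)\bigr]$
and bound $\epsilon_\tau$ by a one-step sampling term plus $\gamma\epsilon_{\tau+1}$. To make the sampling term explicit, the key preliminary is a Lipschitz bound on both $\upsilon^*_\tau$ and the lower bound $\upsilon^{\pb}_\tau$ computed by PBVI: writing $\kappa_\tau \doteq c\,\tfrac{1-\gamma^{\ell-\tau}}{1-\gamma}$, I would first verify by backward induction on $\tau$ (using $\kappa_\ell=0$, the bound $|R(s,a)-R(s',a)|\le c\|s-s'\|_1$, and the non-expansiveness of the occupancy transition $T$ in $\ell_1$) that $\upsilon^*_\tau$ is $\kappa_\tau$-Lipschitz in $\|\cdot\|_1$. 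The same constant bounds the entries of any $\alpha$-vector appearing in $\mathbb{V}_\tau$, because unfolding the recursion in Figure~\ref{fig:greedy:lp:retrival} shows that each such $\alpha(x,o^{\pb})$ is a (discounted) sum of at most $\ell-\tau$ rewards in $[-c,c]$; since the representation of Theorem~\ref{thm:cunha} is a max-of-min of functions that, once expanded via $s^{\mathtt{m},\circ}(o^{\pw})\cdot s^{\mathtt{c},o^{\pw}}(x,o^{\pb})=s(x,o)$, become linear in $s$ with coefficients bounded by $\kappa_\tau$, we conclude that $\upsilon^{\pb}_\tau$ is also $\kappa_\tau$-Lipschitz in $\|\cdot\|_1$.

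Given these Lipschitz bounds, the inductive step proceeds as follows. Fix $s\in S_\tau$ and let $s'\in\bar{S}_\tau$ be its nearest sample, so $\|s-s'\|_1\le\delta_{\bar S_{0:}}$. Insert and subtract $\upsilon^*_\tau(s')$ and $\upsilon^{\pb}_\tau(s')$ to decompose
\[
\upsilon^*_\tau(s)-\upsilon^{\pb}_\tau(s)\;\le\;\bigl[\upsilon^*_\tau(s)-\upsilon^*_\tau(s')\bigr]+\bigl[\upsilon^*_\tau(s')-\upsilon^{\pb}_\tau(s')\bigr]+\bigl[\upsilon^{\pb}_\tau(s')-\upsilon^{\pb}_\tau(s)\bigr].
\]
The first and third brackets are at most $\kappa_\tau\delta_{\bar S_{0:}}$ by the two Lipschitz bounds established above. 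For the middle bracket I would argue that at a sampled point $s'$, PBVI performs an \emph{exact} backup of its current lower bound: Theorem~\ref{thmsimstomgbellmanoptimalityeqn} guarantees that $V_\theta$ added to $\mathbb{V}_\tau$ satisfies $V_\theta(s')=\mathbb{H}^{\pb}(\upsilon^{\pb}_{\tau+1},s',\theta)$ with $\theta=\mathbb{G}^{\pb}(\upsilon^{\pb}_{\tau+1},s')$, so $\upsilon^{\pb}_\tau(s')\ge H\upsilon^{\pb}_{\tau+1}(s')$ where $H$ is the exact min--max Bellman operator of Eq.~(\ref{eqn:bellman:optimality}). Combined with monotonicity and $\gamma$-contractivity of $H$ in sup-norm (both standard for saddle-point backups), this yields $\upsilon^*_\tau(s')-\upsilon^{\pb}_\tau(s')\le\gamma\epsilon_{\tau+1}$. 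Taking the supremum over $s$ gives the recursion $\epsilon_\tau\le 2\kappa_\tau\delta_{\bar S_{0:}}+\gamma\epsilon_{\tau+1}$.

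Finally, unrolling from $\epsilon_\ell=0$ yields $\epsilon_0\le 2\delta_{\bar S_{0:}}\sum_{\tau=0}^{\ell-1}\gamma^\tau\kappa_\tau$. Substituting $\kappa_\tau=c(1-\gamma^{\ell-\tau})/(1-\gamma)$ and evaluating
\[
\sum_{\tau=0}^{\ell-1}\gamma^\tau\,\frac{1-\gamma^{\ell-\tau}}{1-\gamma}\;=\;\frac{1}{(1-\gamma)^2}\Bigl(1-\gamma^\ell-\ell\gamma^\ell(1-\gamma)\Bigr)\;=\;\frac{1+\ell\gamma^{\ell+1}-(\ell+1)\gamma^\ell}{(1-\gamma)^2}
\]
gives exactly the stated constant. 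The main obstacle is the middle bracket in the decomposition: verifying that PBVI's point-based backup truly reproduces the exact backup at sampled occupancy states requires using both the optimality of the greedy operator $\mathbb{G}^{\pb}$ (Theorem~\ref{thm:greedy}) and the soundness of the update operator $\mathbb{H}^{\pb}$ (Theorem~\ref{thmsimstomgbellmanoptimalityeqn}) simultaneously, and further needs the monotonicity of the min--max Bellman operator on the PWLC representations, which is where the novel value-function class of Theorem~\ref{thm:cunha} must be handled with care. Everything else is a routine Lipschitz/contraction calculation.
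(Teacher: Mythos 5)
Your proposal is correct and lands on exactly the stated constant, but it organizes the argument differently from the paper. The paper's proof works directly at the level of the linear functions: for an unvisited $s^*_\tau$ and its nearest sample $s_\tau$, it compares $\alpha^*_\tau(s^*_\tau)$ with $\alpha_\tau(s_\tau)$, inserts $\alpha^*_\tau(s_\tau)$, uses the point-based optimality of $\alpha_\tau$ at $s_\tau$ (i.e.\ $\alpha^*_\tau(s_\tau)\le\alpha_\tau(s_\tau)$) to rearrange the difference into $(\alpha^*_\tau-\alpha_\tau)\cdot(s^*_\tau-s_\tau)$, and applies H\"older together with the bound $\|\alpha^*_\tau-\alpha_\tau\|_\infty\le c\,\tfrac{1-\gamma^{\ell-\tau}}{1-\gamma}$; the stagewise errors are then simply summed with weights $\gamma^\tau$. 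You instead set up the classical PBVI recursion $\epsilon_\tau\le 2\kappa_\tau\delta_{\bar S_{0:}}+\gamma\epsilon_{\tau+1}$ via a three-term decomposition (two Lipschitz terms plus an exact-backup-plus-contraction term at the sampled point) and unroll it; since the unrolled recursion is precisely the discounted sum the paper writes down, the two computations coincide. What your route buys is twofold: the factor of $2$ arises transparently from the two Lipschitz brackets (in the paper's chain of inequalities the per-stage bound is stated without the $2$, which then appears only in the final line, implicitly from bounding the \emph{difference} of two vectors each of sup-norm at most $\kappa_\tau$), and the justification for why stagewise errors accumulate as $\sum_\tau\gamma^\tau\epsilon_\tau$ is made explicit through monotonicity and $\gamma$-contraction of the max--min backup rather than asserted. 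The price is that you must separately verify the Lipschitz property of the lower bound $\upsilon^{\pb}_\tau$ in its max-of-min-of-linear representation and the exactness of the point-based backup at sampled states (via Theorems~\ref{thm:greedy} and~\ref{thmsimstomgbellmanoptimalityeqn}), steps the paper's more direct $\alpha$-vector comparison avoids. Both are sound instantiations of the \citet{pineau2003point} density argument.
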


Notice that the error that the PBVI algorithm does is $2c \delta_{ \bar{S}_{0:}}/(1-\gamma)^2$ whenever $\ell\to\infty$, which is equivalent to the error bound from \citet{pineau2003point}.

\section{Experiments}
We test our proposed method's performance on several well-known benchmarks: Adversarial Tiger, Competitive Tiger, Recycling, Mabc, and Matching Pennies. These are among challenging benchmarks available in the literature of partially observable stochatic games, for which we refer the reader to \url{http://masplan.org/} for an in-depth description. Many of them are well-known common-payoff benchmark problems adapted to our competitive setting by making player $\circ$ minimize (rather than maximize) the objective function. Further experimental details are provided in Appendix \ref{sec:computational_resources}.

\begin{wraptable}{r}{0.6\textwidth}
    \caption{Snapshot of empirical results. For each game and algorithm, we report time (in hours) and the best value for horizons $\ell \in \{2,3,4,5,7,10\}$. {\sc oot} means a time limit of 2 hours has been exceeded, while {\sc oom} means that the algorithms ran out of memory}
    \label{table:main_results}
    \resizebox{\linewidth}{!}{
        \begin{tabular}{@{}c c rr rr rr rr rr rr} 
        
        \toprule%
         \centering%
           Problem
         & $\ell$
         & \multicolumn{2}{c}{{{\bfseries PBVI$_3$}}}
         & \multicolumn{2}{c}{{{\bfseries PBVI$_2$}}}
         & \multicolumn{2}{c}{{{\bfseries PBVI$_1$}}}
         & \multicolumn{2}{c}{{{\bfseries HSVI \citep{DelBufDibSaf-DGAA-23}}}}
         & \multicolumn{2}{c}{{{\bfseries CFR+ \citep{Tammelin14}}}}
         \\

        \cmidrule[0.4pt](r{0.125em}){1-2}%
        \cmidrule[0.4pt](lr{0.125em}){3-4}%
        \cmidrule[0.4pt](lr{0.125em}){5-6}%
        \cmidrule[0.4pt](lr{0.125em}){7-8}%
        \cmidrule[0.4pt](lr{0.125em}){9-10}%
        \cmidrule[0.4pt](lr{0.125em}){11-12}%
        \cmidrule[0.4pt](lr{0.125em}){13-14}%

            & 2 & 0 & \textcolor{sthlmRed}{\bf-0.4} & 0 & \textcolor{sthlmRed}{\bf-0.4} & 0 & \textcolor{sthlmRed}{\bf-0.4} & 0 & \textcolor{sthlmRed}{\bf-0.4} & 0 & \textcolor{sthlmRed}{\bf-0.4} \\
            & 3 & 0 & \textcolor{sthlmRed}{\bf-0.56} & 0 & \textcolor{sthlmRed}{\bf-0.56} & 0 & \textcolor{sthlmRed}{\bf-0.56} & 0 & \textcolor{sthlmRed}{\bf-0.56} & 0 & \textcolor{sthlmRed}{\bf-0.56} \\
            & 4 & 0 & -0.76 & 0.02 & -0.76 & 0.02 & -0.71 & OOT & [-1.89, 0.11] & 0 & \textcolor{sthlmRed}{\bf-0.75} \\
            & 5 & 0.07 & \textcolor{sthlmRed}{\bf-0.96} & 0.07 & \textcolor{sthlmRed}{\bf-0.96} & 1.7 & \textcolor{sthlmRed}{\bf-0.96} & OOT & [-3.54, 1.13] & -- & OOM \\
            & 7 & 1.3 & -1.36 & 0.1 & \textcolor{sthlmRed}{\bf-1.4} & 2 & -1.36 & OOT & [-4.4, 2.24] & -- & OOM \\
            \multirow{-6}{*}{\shortstack[C]{Adversarial \\ Tiger }} & 10 & 0.18 & -1.99 & 0.16 & \textcolor{sthlmRed}{\bf-2} & 2.8 & -1.98 & 1 & [-42, 24.6] & -- & OOM \\
        
        \hline
            & 2 & 0 & -0.01 & 0 & -0.015 & 0 & -0.015 & 0 & 0. & 0 & \textcolor{sthlmRed}{\bf0} \\
            & 3 & 0 & -0.06 & 0 & -0.033 & 0 & -0.033 & 1 & [-0.45,0.45] & 0.11 & \textcolor{sthlmRed}{\bf0} \\
            & 4 & 0.11 & -0.06 & 0.06 & \textcolor{sthlmRed}{\bf-0.056} & 0.03 & \textcolor{sthlmRed}{\bf-0.056} & OOT & [-1.11, 1.11] & -- & OOM \\
            & 5 & 1.5 & -0.07 & 0.1 & \textcolor{sthlmRed}{\bf-0.055} & 0.05 & -0.079 & OOT & [-1.84, 1.84] & -- & OOM \\
            & 7 & OOT & -0.14 & OOT & -0.16 & 3 & \textcolor{sthlmRed}{\bf-0.15} & OOT & [-7.31, 7.31] & -- & OOM \\
            \multirow{-6}{*}{\shortstack[C]{Competitive \\ Tiger }} & 10 & OOT & -0.2 & OOT & -0.18 & OOT & \textcolor{sthlmRed}{\bf-0.22} & 1 & [-52, 52] & -- & OOM \\
            
        \hline
            & 2 & 0 & \textcolor{sthlmRed}{\bf0.26} & 0 & \textcolor{sthlmRed}{\bf0.26} & 0 & \textcolor{sthlmRed}{\bf0.26} & 0 & \textcolor{sthlmRed}{\bf0.26} & 0 & \textcolor{sthlmRed}{\bf0.26} \\
            & 3 & 0 & \textcolor{sthlmRed}{\bf0.32} & 0 & \textcolor{sthlmRed}{\bf0.32} & 0.03 & 0.32 & 0.03 & 0.32 & 0 & \textcolor{sthlmRed}{\bf0.32} \\
            & 4 & 0 & \textcolor{sthlmRed}{\bf0.36} & 0 & \textcolor{sthlmRed}{\bf0.36} & 0.4 & \textcolor{sthlmRed}{\bf0.36} & OOT & [-0.04, 0.78] & 0.3 & \textcolor{sthlmRed}{\bf0.36} \\
            & 5 & 0 & \textcolor{sthlmRed}{\bf0.4} & 0.08 & \textcolor{sthlmRed}{\bf0.4} & 0.1 & \textcolor{sthlmRed}{\bf0.4} & OOT & [-0.49, 1.39] & -- & OOM \\
            & 7 & 0.12 & \textcolor{sthlmRed}{\bf0.48} & 3 & \textcolor{sthlmRed}{\bf0.48} & 0.31 & 0.47 & OOT & [-1.3, 2.5] & -- & OOM \\
            \multirow{-6}{*}{\shortstack[C]{Recycling }} & 10 & OOT & 0.58 & OOT & 0.59 & 0.7 & \textcolor{sthlmRed}{\bf0.6} & OOT & [ 16.2, -10.7] & -- & OOM \\
        
        \hline
            & 2 & 0 & \textcolor{sthlmRed}{\bf0.078} & 0 & \textcolor{sthlmRed}{\bf0.078} & 0 & \textcolor{sthlmRed}{\bf0.078} & 0 & \textcolor{sthlmRed}{\bf0.78} & 0 & \textcolor{sthlmRed}{\bf0.08} \\
            & 3 & 0 & \textcolor{sthlmRed}{\bf0.097} & 0 & \textcolor{sthlmRed}{\bf0.098} & 0 & \textcolor{sthlmRed}{\bf0.097} & 0 & \textcolor{sthlmRed}{\bf0.098} & 0 & \textcolor{sthlmRed}{\bf0.098} \\
            & 4 & 0.03 & \textcolor{sthlmRed}{\bf0.11} & 0.03 & \textcolor{sthlmRed}{\bf0.11} & 0.03 & \textcolor{sthlmRed}{\bf0.11} & OOT & [0.1, 0.15] & 0 & \textcolor{sthlmRed}{\bf0.11} \\
            & 5 & 0.08 & \textcolor{sthlmRed}{\bf0.12} & 0 & \textcolor{sthlmRed}{\bf0.12} & 0.1 & \textcolor{sthlmRed}{\bf0.12} & 1.6 & [0.08, 0.27] & 0.1 & \textcolor{sthlmRed}{\bf0.12} \\
            & 7 & 0 & \textcolor{sthlmRed}{\bf0.14} & 0 & \textcolor{sthlmRed}{\bf0.14} & 2.8 & \textcolor{sthlmRed}{\bf0.14} & OOT & [0.08, 0.5] & -- & OOM \\
            \multirow{-6}{*}{\shortstack[C]{Mabc}} & 10 & 0.1 & \textcolor{sthlmRed}{\bf0.16} & 0.4 & 0.17 & 0.3 & 0.17 & OOT & [3.2, 0.35] & -- & OOM \\
        
        \hline
            & 2 & 0 & $\textcolor{sthlmRed}{\bf0.2}$ & 0 & $\textcolor{sthlmRed}{\bf0.2}$ & 0 & $\textcolor{sthlmRed}{\bf0.2}$ & 0 & $\textcolor{sthlmRed}{\bf0.2}$ & 0 & $\textcolor{sthlmRed}{\bf0.2}$ \\
            & 3 & 0 & 0.37 & 0 & $\textcolor{sthlmRed}{\bf0.4}$ & 0 & $\textcolor{sthlmRed}{\bf0.4}$ & 0 & $\textcolor{sthlmRed}{\bf0.4}$ & 0 & $\textcolor{sthlmRed}{\bf0.4}$ \\
            & 4 & OOT & 0.61 & 0 & $\textcolor{sthlmRed}{\bf0.6}$ & 0 & $\textcolor{sthlmRed}{\bf0.6}$ & 0 & $\textcolor{sthlmRed}{\bf0.6}$ & 0 & $\textcolor{sthlmRed}{\bf0.6}$ \\
            & 5 & OOT & 1.17 & 0.1 & $\textcolor{sthlmRed}{\bf0.8}$ & 0 & $\textcolor{sthlmRed}{\bf0.8}$ & OOT & $\textcolor{sthlmRed}{\bf0.8}$ & 0 & $\textcolor{sthlmRed}{\bf0.8}$ \\
            & 7 & OOT & 1.19 & 0 & 1.157 & 0 & $\textcolor{sthlmRed}{\bf1.2}$ & OOT & [-1.7, 3.95] & 0 & $\textcolor{sthlmRed}{\bf1.2}$ \\
            \multirow{-6}{*}{\shortstack[C]{Matching \\ Pennies }} & 10 & OOT & \textcolor{sthlmRed}{\bf1.8} & 0 & \textcolor{sthlmRed}{\bf1.8} & 0 & \textcolor{sthlmRed}{\bf1.8} & -- & -- & 0.4 & \textcolor{sthlmRed}{\bf1.8} \\
        \bottomrule
        \end{tabular}
        \vspace{-.5cm}
    }
\end{wraptable}

For each problem, we report the performance of three versions of the PBVI algorithm: PBVI$_1$, which is the most basic version of the algorithm as it does not rely on any pruning, and PBVI$_2$ and PBVI$_3$ which both implement the bounded pruning algorithm presented in Algorithm \ref{pruning:zsposg}. Pruning is either applied on the collection of linear functions $V_\tau$, or on the points that are linked to such functions. Their performance is compared against that obtained by \citet{DelBufDibSaf-DGAA-23}, where the HSVI algorithm was used, and that obtained by the CFR+ algorithm of \citet{Tammelin14}. 
Table \ref{table:main_results} shows a panoramic of our results: we report the best value obtained by each algorithm (in pink) on the aforementioned benchmarks for several horizons $\ell$ of increasing complexity. Specifically, all algorithms were tested on $\ell \in \{2,3,4,5,7,10 \}$. Note that the larger the value of $\ell$, the more memory- or time-demanding the task becomes, which might prevent the algorithms from converging. If this is the case, we report the final performance of the algorithms as OOM and OOT, respectively. We can see from the reported results that, overall, all versions of the PBVI algorithm are able to tackle the aforementioned benchmarks successfully. Specifically, we can note that the PBVI$_2$ variation of the algorithm performs on par with its PBVI$_1$ counterpart across most games and horizons $\ell$, therefore highlighting the benefits of including pruning to the algorithm (see again Appendix \ref{pruning:zsposg} for further details). However, there are also cases where PBVI$_3$ shows promising results (see, for example, the results obtained on Recycling and Box Pushing). Yet, it is important to note that these results are not as robust as the ones obtained by PBVI$_2$, as demonstrated by the performance obtained on the Matching Pennies game. 

Overall, one can identify the main strength of our proposed approach in its scalability, as, to the best of our knowledge, this is the first example where such benchmarks can be solved even for very large horizons, a scenario that the techniques proposed by \citet{DelBufDibSaf-DGAA-23} and \citet{Tammelin14} are not capable of dealing with. We visually represent the scalability of our PBVI variants across five games in Figure \ref{fig:main_results}, where the challenging horizon of $\ell=10$ is considered. The capability of our PBVI variants to rapidly find optimal solutions comparable to HSVI is also depicted in the plots presented in Appendix \ref{sec:appendix:additional:plots}, where the same games presented in Figure \ref{fig:main_results} are reported but for $\ell=4$. Additionally, we also successfully applied PBVI$_1$ to address specific subclasses of zs-POSGs, including fully-observable zs-POSGs (MGs) and zs-POSGs with public actions and observations (POMGs), as detailed in Appendix \ref{sec:appendix:subclasses}. All these results illustrate the adaptability of our methods to subclasses of zs-POSGs while maintaining high efficiency. In certain domains, the scalability of our PBVI variants to larger games and planning horizons is constrained by the considerable number of iterations (and consequently samples) required. Point pruning techniques can be employed, like in PBVI$_3$, although this comes at the cost of sacrificing theoretical guarantees.

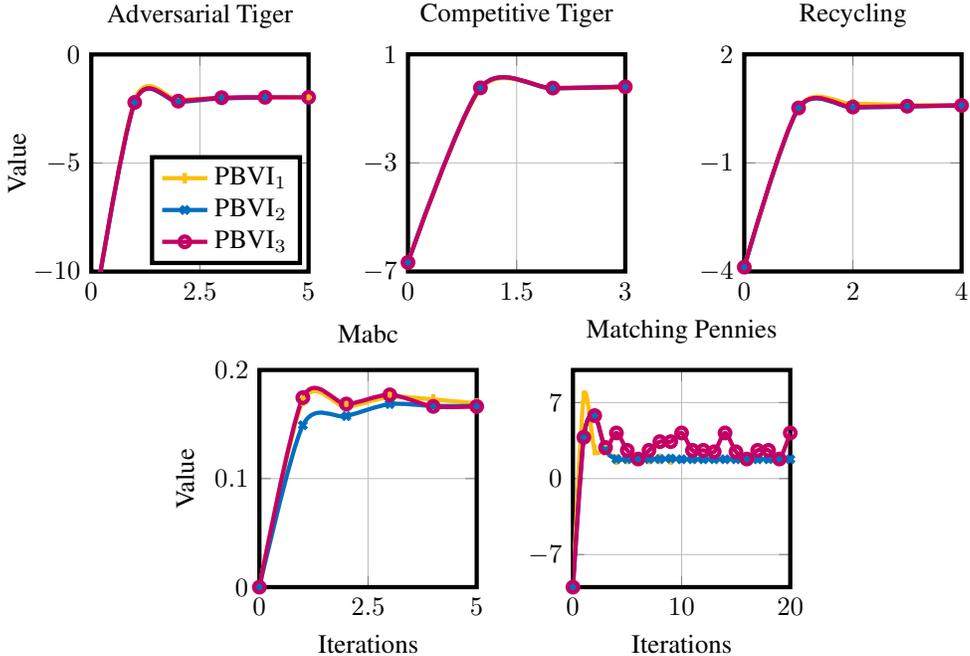
\begin{figure}[ht]
  \centering
  \begin{subfigure}{0.32\textwidth}
    \centering
    \begin{tikzpicture}
      \begin{axis}[
        title={Adversarial Tiger},
        width=\linewidth,
        height=\linewidth,
        ylabel={Value},
        ylabel style={at={(axis description cs:-0.25,.5)}},
        xmin=0, xmax=5,
        xtick={0, 2.5, 5}, 
        ymin=-10, ymax=0,
        ytick={-10, -5, 0},
        grid=both,
        style={ultra thick},
        legend pos=south east,
        ]
        \addplot+[smooth,mark=|,sthlmYellow] table [col sep=comma, x=iter, y=value] {./paper/results/PBVI_without_pruning/H10/adversarialtiger.csv};
        \addplot+[smooth,mark=x,sthlmBlue] table [col sep=comma, x=iter, y=value] {./paper/results/PBVI_pruning_hyperplane/H10/adversarialtiger.csv};
        \addplot+[smooth,mark=o,sthlmRed] table [col sep=comma, x=iter, y=value] {./paper/results/PBVI_pruning_hyperplane_point/H10/adversarialtiger.csv};
        \legend{PBVI$_1$, PBVI$_2$, PBVI$_3$}
      \end{axis}
    \end{tikzpicture}
    \label{fig:lineplots:pbvi:adversarial_tiger}
  \end{subfigure}%
  \begin{subfigure}{0.32\textwidth}
    \centering
    \begin{tikzpicture}
      \begin{axis}[
        title={Competitive Tiger},
        width=\linewidth,
        height=\linewidth,
        xmin=0, xmax=3,
        xtick={0, 1.5, 3}, 
        ymin=-7, ymax=1,
        ytick={-7, -3, 1},
        grid=major,
        style={ultra thick},
        ]
        \addplot+[smooth,mark=|,sthlmYellow] table [col sep=comma, x=iter, y=value] {./paper/results/PBVI_without_pruning/H10/competitivetiger.csv};
        \addplot+[smooth,mark=x,sthlmBlue] table [col sep=comma, x=iter, y=value] {./paper/results/PBVI_pruning_hyperplane/H10/competitivetiger.csv};
        \addplot+[smooth,mark=o,sthlmRed] table [col sep=comma, x=iter, y=value] {./paper/results/PBVI_pruning_hyperplane_point/H10/competitivetiger.csv};
      \end{axis}
    \end{tikzpicture}
    \label{fig:lineplots:pbvi:competitive_tiger}
  \end{subfigure}%
  \begin{subfigure}{0.32\textwidth}
    \centering
    \begin{tikzpicture}
      \begin{axis}[
        title={Recycling},
        width=\linewidth,
        height=\linewidth,
        ylabel style={yshift=-5mm},
        xmin=0, xmax=4,
        xtick={0, 2, 4}, 
        ymin=-4, ymax=2,
        ytick={-4, -1, 2},
        grid=both,
        style={ultra thick},
        ]
        \addplot+[smooth,mark=|,sthlmYellow] table [col sep=comma, x=iter, y=value] {./paper/results/PBVI_without_pruning/H10/recycling.csv};
        \addplot+[smooth,mark=x,sthlmBlue] table [col sep=comma, x=iter, y=value] {./paper/results/PBVI_pruning_hyperplane/H10/recycling.csv};
        \addplot+[smooth,mark=o,sthlmRed] table [col sep=comma, x=iter, y=value] {./paper/results/PBVI_pruning_hyperplane_point/H10/recycling.csv};
      \end{axis}
    \end{tikzpicture}
    \label{fig:lineplots:pbvi:recycling}
  \end{subfigure}
  \begin{subfigure}{0.32\textwidth}
    \centering
    \begin{tikzpicture}
      \begin{axis}[
        title={Mabc},
        width=\linewidth,
        height=\linewidth,
        xlabel={Iterations},
        ylabel={Value},
        ylabel style={at={(axis description cs:-0.25,.5)}},
        xmin=0, xmax=5,
        xtick={0, 2.5, 5}, 
        ymin=0, ymax=0.2,
        ytick={0, 0.1, 0.2},
        grid=both,
        style={ultra thick},
        ]
        \addplot+[smooth,mark=|,sthlmYellow] table [col sep=comma, x=iter, y=value] {./paper/results/PBVI_without_pruning/H10/mabc.csv};
        \addplot+[smooth,mark=x,sthlmBlue] table [col sep=comma, x=iter, y=value] {./paper/results/PBVI_pruning_hyperplane/H10/mabc.csv};
        \addplot+[smooth,mark=o,sthlmRed] table [col sep=comma, x=iter, y=value] {./paper/results/PBVI_pruning_hyperplane_point/H10/mabc.csv};

      \end{axis}
    \end{tikzpicture}
    \label{fig:lineplots:pbvi:mabc}
  \end{subfigure}%
  \begin{subfigure}{0.32\textwidth}
    \centering
    \begin{tikzpicture}
      \begin{axis}[
        title={Matching Pennies},
        width=\linewidth,
        height=\linewidth,
        xlabel={Iterations},
        xmin=0, xmax=20,
        xtick={0, 10, 20}, 
        ymin=-10, ymax=10,
        ytick={-7, 0, 7},
        grid=both,
        style={ultra thick},
        ]
        \addplot+[smooth,mark=|,sthlmYellow] table [col sep=comma, x=iter, y=value] {./paper/results/PBVI_without_pruning/H10/matchingpennies.csv};
        \addplot+[smooth,mark=x,sthlmBlue] table [col sep=comma, x=iter, y=value] {./paper/results/PBVI_pruning_hyperplane/H10/matchingpennies.csv};
        \addplot+[smooth,mark=o,sthlmRed] table [col sep=comma, x=iter, y=value] {./paper/results/PBVI_pruning_hyperplane_point/H10/matchingpennies.csv};
      \end{axis}
    \end{tikzpicture}
    \label{fig:lineplots:pbvi:matching_pennies}
  \end{subfigure}%
  \caption{A visual representation of the performance of our proposed algorithms for the challenging horizon $\ell = 10$ across five different games. We can see that for Adversarial Tiger, Competitive Tiger and Recycling all three PBVI variants perform equally. PBVI$_2$ also matches the performance of PBVI$_1$ on Mabc and Matching Pennies, with the latter benchmark being harder to solve for PBVI$_3$. \textbf{Best viewed in color.}}
  \label{fig:main_results}
\end{figure}

\section{Conclusion}

We have introduced an offline point-based value iteration algorithm for finding near-optimal zs-POSG solutions. It leverages a recently discovered uniform continuity property, allowing for the design of update and backup operators with significantly reduced complexity. This approach yields point-based value iteration algorithms capable of scaling to larger games and planning horizons while retaining theoretical guarantees. Finally, we also provided the first offline algorithm, building upon \citeauthor{bellman}'s principle of optimality, which can effectively scale to large games. Experimental results conducted on various domains support these claims.      

The utilization of dynamic programming presents an appealing approach by decomposing the original zs-POSG into smaller subgames, subsequently solving them optimally in reverse order. This methodology also enables the extrapolation of values from one subgame to another through uniform continuity properties. These two attributes mutually reinforce each other while upholding theoretical guarantees. To the best of our knowledge, this work represents the first algorithmic framework for zs-POSGs that concurrently demonstrates such characteristics.  

The utilization of the dynamic programming approach is significant due to its association with reinforcement learning. Recent studies have showcased the capability to address reinforcement learning operators related to common-payoff occupancy Markov games (common-payoff OMGs), as evidenced by \citet{pmlr-v80-dibangoye18a}. This paper is anticipated to serve as a fundamental basis for further advancements in dynamic programming and reinforcement learning methodologies for zs-POSGs as zero-sum occupancy Markov games (zero-sum OMGs).

\bibliographystyle{plainnat}
\bibliography{mybib}

\appendix

\section{Proof of Theorem \ref{thm:greedy}}
\label{proofthmgreedy}
\thmgreedy*
\begin{proof}
The proof starts with the definition of the greedy-action selection at occupancy state $s_\tau$, and value function $\upsilon^{\pb}_{\tau+1}\colon s_{\tau+1}\mapsto \max_{V\in \mathbb{V}} \sum_{o^{\pw} \in O^{\pw}} s_{\tau+1}^{\mathtt{m},\circ}(o^{\pw}) \min_{\alpha\in V} \alpha(s_{\tau+1}^{\mathtt{c},o^{\pw}})$. Let $q^{\pb}_\tau \colon (s_\tau,a_\tau) \mapsto R(s_\tau,a_\tau) + \gamma\upsilon^{\pb}_{\tau+1}(T(s_\tau,a_\tau))$. Then, 
$\mathbb{G}^{\pb}(s_\tau,\upsilon^{\pb}_{\tau+1}) =\textstyle \argmax_{a_\tau^{\pb}}\min_{a_\tau^{\pw}} ~ q^{\pb}_\tau(s_\tau,a_\tau)$.
Let $s_{\tau+1}^{\mathtt{m},\pw}(o_{\tau+1}^{\pw})$ and $s_{\tau+1}^{\mathtt{c}, o_{\tau+1}^{\pw}}$ be marginal probability and conditional occupancy state associated with $T(s_\tau,a_\tau)$.  The injection of  (\ref{eqn:thm:greedy}) into $\upsilon^{\pb}_{\tau+1}(T(s_\tau,a_\tau))$ produces the following expression:
\begin{align*} 
\upsilon^{\pb}_{\tau+1}(T(s_\tau,a_\tau))&=\max_{V\in \mathbb{V}} {\textstyle \sum_{o_{\tau+1}^{\pw} \in O_{\tau+1}^{\pw}}}s_{\tau+1}^{\mathtt{m},\pw}(o_{\tau+1}^{\pw}) \min_{\alpha\in V}\alpha(s_{\tau+1}^{\mathtt{c},o_{\tau+1}^{\pw}}).
\end{align*}
If we replace $\max_{V\in \mathbb{V}}$ by $\max_{\xi \in \triangle(\mathbb{V})}$  then there is no loss in optimality, \ie
\begin{align*} 
&=\max_{\xi \in \triangle(\mathbb{V})} {\textstyle \sum_{o_{\tau+1}^{\pw} \in O_{\tau+1}^{\pw}}  \sum_{V \in \mathbb{V}}}\min_{\alpha \in V}\xi(V) s_{\tau+1}^{\mathtt{m},\pw}(o_{\tau+1}^{\pw}) \alpha(s_{\tau+1}^{\mathtt{c},o_{\tau+1}^{\pw}}).
\end{align*}
\citet{cunha2023convex} shows that these statistics $s_{\tau+1}^{\mathtt{m},\pw}(o_{\tau+1}^{\pw})$ and $s_{\tau+1}^{\mathtt{c}, o_{\tau+1}^{\pw}}$  depend on $T(s_\tau,a_\tau)$ only through marginal probability $s_\tau^{\mathtt{m},\pw}(o_\tau^{\pw})$, conditional occupancy state $s_\tau^{\mathtt{c},o_\tau^{\pw}}$ associated with $s_\tau$, action probability $a_\tau^{\pw}(u_\tau^{\pw}|o_\tau^{\pw})$, observation $z_{\tau+1}^{\pw}$, and decision rule $a_\tau^{\pb}$, \ie for all $y\in X$ and $o_{\tau+1} = (o_\tau ,u ,z )\in O_{\tau+1}$
\begin{align*}
s_{\tau+1}^{\mathtt{m},\pw}(o_{\tau+1}^{\pw})\cdot s_{\tau+1}^{\mathtt{c}, o_{\tau+1}^{\pw}}(y,o_{\tau+1}^{\pb})  &=\textstyle  a_\tau^{\pw}(u^{\pw}|o_\tau^{\pw}) \cdot a_\tau^{\pb}(u^{\pb}|o_\tau^{\pb})\sum_{x \in X}s_\tau(x,o_\tau)\cdot p_{xy}^{uz}.
\end{align*}
Exploiting this insight along with the linearity of $\alpha$ yields:
\begin{align*} 
\upsilon^{\pb}_{\tau+1}(T(s_\tau,a_\tau)) =\max_{\xi \in \triangle(\mathbb{V})}  \sum_{o_{\tau+1}^{\pw}} \sum_{V \in \mathbb{V}} \min_{\alpha\in V} a_\tau^{\pw}(u^{\pw}|o_\tau^{\pw}) \xi(V) \sum_{o_\tau^{\pb}, u^{\pb}} a_\tau^{\pb}(u^{\pb}|o_\tau^{\pb}) \sum_{x,y,z^{\pb}} \alpha(y,o_{\tau+1}) s_\tau(x,o_\tau) p_{xy}^{uz}.
\end{align*}
Define the following two intermediate functions $g_{V}^{\alpha}$ and $q_{V}^{\alpha}$ for all joint histories $ouz$, 
\begin{align*}
 &\textstyle g_{V}^{\alpha}\colon(ouz^{\pw})\mapsto   \sum_x  s_\tau(x,o) ( \frac{r_{xu}}{|Z^{\pw}|} + \gamma \sum_{y,z^{\pb}}\alpha(y,o^{\pb}u^{\pb}z^{\pb}) p_{xy}^{uz})\\
&\textstyle q_{V}^{\alpha}\colon(o^{\pw}u^{\pw}z^{\pw})\mapsto  \sum_{o^{\pb}}\sum_{u^{\pb}} a^{\pb}(u^{\pb}|o^{\pb}) \cdot g_{V}^{\alpha}(ouz^{\pw}).
\end{align*}
Consequently, the action value $q^{\pb}_\tau(s_\tau,a_\tau)$ can be rewritten as follows:
\begin{align*} 
&\max_{\xi \in \triangle(\mathbb{V})} {\textstyle \sum_{o^{\pw}}\sum_{u^{\pw}} a^{\pw}(u^{\pw}|o^{\pw}) \sum_{V \in \mathbb{V}} \xi(V) \sum_{z^{\pw}}}\min_{\alpha\in V}   q_{V}^{\alpha}(o^{\pw}u^{\pw}z^{\pw}).
\end{align*}
Thus, the greedy decision rule of player ${\pb}$ is the solution of the following optimization problem:
\begin{align*}
\max_{a^{\pb}}\min_{a^{\pw}}\max_{\xi \in \triangle(\mathbb{V})}  \sum_{o^{\pw},u^{\pw}} a^{\pw}(u^{\pw}|o^{\pw}) \sum_{V \in \mathbb{V}} \xi(V) \sum_{z^{\pw}}\min_{\alpha\in V}   q_{V}^{\alpha}(o^{\pw}u^{\pw}z^{\pw}).
\end{align*}
Since the function 
$ o^{\pw} \mapsto  \sum_{u^{\pw}} a^{\pw}(u^{\pw}|o^{\pw}) \sum_{V \in \mathbb{V}} \xi(V) \sum_{z^{\pw}}\min_{\alpha\in V}   q_{V}^{\alpha}(o^{\pw}u^{\pw}z^{\pw})$ 
is a bilinear function of $a^{\pw}$ and $\xi$ as the quantities $\sum_{z^{\pw}}\min_{\alpha\in V}   q_{V}^{\alpha}(o^{\pw}u^{\pw}z^{\pw})$ are constants \wrt any pair $(a^{\pw},\xi)$.
Using Von Neuman minimax theorem \citep{Neumann1928}, one can swap both operators $\min_{a^{\pw}}$ and $\max_{\xi}$  with no loss in optimality, \ie 
\begin{align*}
\max_{\xi,a^{\pb}}\min_{a^{\pw}}  \sum_{o^{\pw},u^{\pw}} a^{\pw}(u^{\pw}|o^{\pw}) \sum_{V \in \mathbb{V}} \xi(V) \sum_{z^{\pw}}\min_{\alpha\in V}   q_{V}^{\alpha}(o^{\pw}u^{\pw}z^{\pw}).
\end{align*}
Let us define the decision variable $\theta(V,u^{\pb}|o^{\pb})  = \xi(V) \cdot a^{\pb}(u^{\pb}|o^{\pb})$ and function $\beta_{V}^{\alpha}\colon (o^{\pw}u^{\pw}z^{\pw}) \mapsto  \xi(V) \cdot q_{V}^{\alpha} (o^{\pw}u^{\pw}z^{\pw}) = \sum_{o^{\pb}}\sum_{u^{\pb}} \theta(V,u^{\pb}|o^{\pb}) \cdot g_{V}^{\alpha}(ouz^{\pw})$, such that the greedy decision rule is the solution of the maximin optimization problem:
\begin{align*}
\max_{\theta}\min_{a^{\pw}} \textstyle \sum_{o^{\pw},u^{\pw}} a^{\pw}(u^{\pw}|o^{\pw}) \sum_{V \in \mathbb{V}}  \sum_{z^{\pw}}\min_{\alpha\in V}   \sum_{o^{\pb}}\sum_{u^{\pb}} \theta(V,u^{\pb}|o^{\pb}) \cdot g_{V}^{\alpha}(ouz^{\pw}).
\end{align*}
Using Wald's maximin model we can convert this maximin optimization problem into a maximization mathematical program, \ie 
\begin{equation*}
\begin{aligned}
\text{maximize}&\textstyle  \sum_{o^{\pw}} \alpha_\theta(o^{\pw}) \\
\text{subject to } &\left|
\begin{array}{l}
 \alpha_\theta(o^{\pw}) \leq \sum_{V}\sum_{z^{\pw}}  \beta_{V}(o^{\pw}u^{\pw}z^{\pw}), ~\forall u^{\pw} \in U^\pw, o^{\pw} \in O^\pw \\
\beta_{V}(o^{\pw}u^{\pw}z^{\pw}) \leq \sum_{o^{\pb}}\sum_{u^{\pb}} \theta(V,u^{\pb}|o^{\pb}) \cdot g_{V}^{\alpha}(ouz^{\pw}),~\forall V \in \mathbb{V}, \alpha \in V, o^{\pw} \in O^\pw, u^{\pw} \in U^\pw, z^{\pw} \in Z^\pw \\
\sum_{V}\sum_{u^{\pb}}  \theta(V,u^{\pb}|o^{\pb}) = 1,~ \forall o^{\pb} \in O^\pb
\end{array}
\right.\\
\text{variables } &\left|
\begin{array}{l}
\theta(V,u^{\pb}|o^{\pb})\in [0,1],~ \forall V \in \mathbb{V}, o^{\pb} \in O^{\pb}, u^{\pb} \in U^\pb
\end{array}
\right.
\end{aligned}   
\end{equation*}
Then, the solution of the linear program in Figure \ref{fig:greedy:lp} is the greedy decision rule of player ${\pb}$, which ends the proof.
\end{proof}

\section{Proof of Theorem \ref{thmsimstomgbellmanoptimalityeqn}}
\label{proofthmsimstomgbellmanoptimalityeqn}
\thmsimstomgbellmanoptimalityeqn*
\begin{proof}
The proof starts from a formal description of a sound point-based update operator. The updated value function $\mathbb{H}^{\pb} (\upsilon^{\pb}_{\tau+1}, s_\tau, \mathbb{G}^{\pb} (\upsilon^{\pb}_{\tau+1}, s_\tau))$, given by $\mathbb{V}_\tau = \mathbb{V}_\tau \cup \{V_{\mathbb{G}^{\pb} (\upsilon^{\pb}_{\tau+1}, s_\tau)}\}$, is sound if the following inequalities hold: 
\begin{enumerate}[($\pmb{A}_1$)]
    \item $\mathbb{H}^{\pb} (\upsilon^{\pb}_{\tau+1}, s_\tau, \mathbb{G}^{\pb} (\upsilon^{\pb}_{\tau+1}, s_\tau))(s_\tau) \geq \upsilon^{\pb}_\tau(s_\tau) $,
    \item $\mathbb{H}^{\pb} (\upsilon^{\pb}_{\tau+1}, s_\tau, \mathbb{G}^{\pb} (\upsilon^{\pb}_{\tau+1}, s_\tau))(\bar{s}_\tau)\leq \mathbb{H}^{\pb} (\upsilon^{\pb}_{\tau+1}, \bar{s}_\tau, \mathbb{G}^{\pb} (\upsilon^{\pb}_{\tau+1}, \bar{s}_\tau))(\bar{s}_\tau)$, for $\bar{s}_\tau \neq s_\tau$.
\end{enumerate}
The proof of inequality ($\pmb{A}_1$) follows that of Theorem \ref{thm:greedy} since $\mathbb{G}^{\pb} (\upsilon^{\pb}_{\tau+1}, s_\tau)$ is the outcome of greedy action-selection operator at occupancy state $s_\tau$. It only remains to prove inequality ($\pmb{A}_2$). To do so, we need to prove the value generated at occupancy state $\bar{s}_\tau\in \bar{S}_\tau$ by the collection of linear functions across conditional occupancy states, $V_{\mathbb{G}^{\pb} (\upsilon^{\pb}_{\tau+1}, s_\tau)}$, constructed for $s_\tau$. 

Figure \ref{fig:greedy:lp:retrival} constructs a linear function for each conditional occupancy state $\bar{s}^{\mathtt{c},o^{\pw}}$ from occupancy state $\bar{s}_\tau\in \bar{S}_\tau$ assuming a fixed policy of player $\pb$. The constructed linear function for conditional occupancy state $\bar{s}^{\mathtt{c},o^{\pw}}$ minimizes the value of player $\pw$ assuming a fixed policy of player $\pb$. Because the fixed policy of $\pb$ was not constructed to maximize value of player $\pb$ at occupancy state $\bar{s}_\tau$, the resulted linear function at $\bar{s}^{\mathtt{c},o^{\pw}}$ lower-bounds the one that would have been created if player $\pb$ maximized upon occupancy state $\bar{s}_\tau$. As a consequence, inequality ($\pmb{A}_2$) holds. Which ends the proof.
\end{proof}

\section{Proof of Theorem \ref{thmerrorbound}}
\label{proofthmerrorbound}
\thmerrorbound*
\begin{proof}
This proof builds upon Theorem \ref{thm:cunha:convex}. Let $s^*_\tau$ an occupancy state, never visited during the planning phase by the PBVI algorithm. There exists an occupancy state $s_\tau \in \bar{S}_\tau$ suth that $\|s^*_\tau - s_\tau \|_1 \leq \delta_{ \bar{S}_{0:}}$. Let $\alpha_\tau$ be a linear function, over occupancy states expressed as distributions over conditional occupancy states, computed during the planning phase by the PBVI algorithm at occupancy state $s_\tau$ after a backup. Let $\alpha^*_\tau$ be the optimal linear function at occupancy state $s^*_\tau$ that would have been computed during the planning phase by the PBVI algorithm at occupancy state $s^*_\tau$ after a backup.
Let $\epsilon_\tau = |\alpha^*_\tau(s^*_\tau) - \alpha_\tau(s_\tau)|$, then 
\begin{align*}
    \epsilon_\tau &= \alpha^*_\tau(s^*_\tau) - \alpha_\tau(s_\tau)\\
                  &= \alpha^*_\tau(s^*_\tau) - \alpha_\tau(s_\tau) + \textcolor{red}{\alpha^*_\tau(s_\tau)- \alpha^*_\tau(s_\tau)},\quad\text{(add zero)}\\
                  &\leq \alpha^*_\tau(s^*_\tau) - \alpha_\tau(s_\tau) + \textcolor{red}{\alpha^*_\tau(s_\tau)}- \alpha_\tau(s_\tau),\quad\text{(since $\alpha^*_\tau(s_\tau)\leq \alpha_\tau(s_\tau)$ )} \\
                  &= (\alpha^*_\tau-\alpha_\tau)\cdot(s^*_\tau-s_\tau),\quad\text{(re-arranging terms)}\\
                  &\leq \|\alpha^*_\tau-\alpha_\tau\|_\infty \|s^*_\tau-s_\tau\|_1,\quad\text{(by Holder's iniequality)}\\
                  &\textstyle \leq c \delta_{ \bar{S}_{0:}} \sum_{\tau=t}^{\ell-1} \gamma^{t-\tau},\quad\text{(since $\|s^*_\tau - s_\tau \|_1 \leq \delta_{ \bar{S}_{0:}}$ and $\|r(\cdot,\cdot)\|_\infty \leq c$)} \\
                  &=c \delta_{ \bar{S}_{0:}} \frac{1-\gamma^{\ell-\tau}}{1-\gamma}.
\end{align*}
 Now, after $\ell$ backups, the error the PBVI algorithm makes is given by $\epsilon  = \sum_{\tau} \gamma^\tau \cdot (\alpha^*_\tau(s^*_\tau) - \alpha_\tau(s_\tau)) = \sum_{\tau} \gamma^\tau \cdot\epsilon_\tau = 2c \delta_{ \bar{S}_{0:}} \frac{1+\ell \gamma^{\ell+1}-(\ell+1)\gamma^\ell}{(1-\gamma)^2}$. Which ends the proof. 
\end{proof}

\section{Pruning Methods}

\subsection{Exact Pruning Method}

The mixed-integer linear program \ref{milp:dominance} tests whether a collection $V_\tau$ is dominated by the collections in $\mathbb{V}_\tau$. This mixed-integer linear program finds the occupancy state $s_\tau$ in which the value function $\upsilon_\tau$ improved the most by added the collection $V_\tau$ to $\mathbb{V}_\tau$. If the value $\delta_{V_\tau}$ maximized by the mixed-integer linear program is non-positive, the collection $V_\tau$ is dominated. Otherwise, the collection $V_\tau$ is not dominated and $\delta_{V_\tau}$ is the amount by which it gives a better value for the occupancy state $s_\tau$ than any collection in $\mathbb{V}_\tau$. While the pruning technique we discussed so far applies over the entire occupancy space, one can drastically reduced time and space complexities by attention of value functions over sample subsets $\bar{S}_{0:} = (\bar{S}_0,\bar{S}_1,\ldots,\bar{S}_{\ell-1})$. A point-based pruning algorithm takes as input a family of collections $\mathbb{V}_\tau$ and returns a subset $\mathbb{V}'_\tau \subseteq \mathbb{V}_\tau$. 

\begin{equation}
\begin{aligned}
\text{maximize}&\textstyle   \quad\delta_{V_\tau} \\
\text{subject to } &\left|
\begin{array}{l}
 \delta_{V_\tau} \leq \sum_{o^{\pw}\in O^{\pw}} (w^{o^{\pw}}_{V_\tau} - w^{o^{\pw}}_{V^{(\kappa)}_\tau}),\quad \forall V^{(\kappa)}_\tau\in \mathbb{V}_\tau \\
 w^{o^{\pw}}_{V_\tau} \leq s_\tau^{\mathtt{m},\pw}(o^{\pw}) \alpha_\tau(s_\tau^{\mathtt{c},o^{\pw}}),\quad \forall \alpha_\tau\in V_\tau  \\
 w^{o^{\pw}}_{V^{(\kappa)}_\tau} \leq s_\tau^{\mathtt{m},\pw}(o^{\pw}) \alpha_\tau(s_\tau^{\mathtt{c},o^{\pw}}),\quad \forall V^{(\kappa)}_\tau\in \mathbb{V}_\tau, \forall\alpha_\tau\in V^{(\kappa)}_\tau  \\
 w^{o^{\pw}}_{V^{(\kappa)}_\tau} \geq s_\tau^{\mathtt{m},\pw}(o^{\pw}) \alpha^{(\kappa)}_\tau(s_\tau^{\mathtt{c},o^{\pw}}) +  (1-c^{o^{\pw}}_{\alpha^{(\kappa)}_\tau})\pmb{\mathbb{M}},\quad \forall V^{(\kappa)}_\tau\in \mathbb{V}_\tau, \forall \alpha^{(\kappa)}_\tau \in V^{(\kappa)}_\tau  \\
\sum_{V^{(\kappa)}_\tau \in \mathbb{V}_\tau} \sum_{\alpha^{(\kappa)}_\tau \in V^{(\kappa)}_\tau} c^{o^{\pw}}_{\alpha^{(\kappa)}_\tau} = 1\\
\sum_{x\in X}\sum_{o\in O_\tau} s_\tau(x,o) = 1
\end{array}
\right.\\
\text{variables } &\left|
\begin{array}{l}
c^{o^{\pw}}_{\alpha^{(\kappa)}_\tau} \in \{0,1\}, \forall V^{(\kappa)}_\tau\in \mathbb{V}_\tau, \forall \alpha^{(\kappa)}_\tau \in V^{(\kappa)}_\tau \\
s_\tau(x,o) \in [0,1]\\
\delta_{V_\tau}\in \mathbb{R}.
\end{array}
\right.
\end{aligned}
\label{milp:dominance}
\end{equation}

\subsection{Bounded Pruning Method}

The bounded pruning, \cf Algorithm \ref{pruning:zsposg}, is a pruning algorithm that works under the assumption that the subsets $\bar{S}_{0:}$ of occupancy states are finite. The algorithm is bounded in the sense that the families $\mathbb{V}_{0:}$ of collections of linear functions are guaranteed to be no larger than $|\bar{S}_{0:}|$. It tests whether a collection $V_\tau$ is dominated by the collections in $\mathbb{V}_\tau$ if for any occupancy state $s_\tau\in \bar{S}_\tau$,
$$\max_{V'_\tau\in \mathbb{V}_\tau} \sum_{o^{\pw}\in O^{\pw}} s^{\mathtt{m},\pw}(o^{\pw}) \min_{\alpha\in V'_\tau} \alpha(s^{\mathtt{c},o^{\pw}}) -  \sum_{o^{\pw}\in O^{\pw}} s^{\mathtt{m},\pw}(o^{\pw}) \min_{\alpha\in V_\tau} \alpha(s^{\mathtt{c},o^{\pw}}) \geq 0.$$
Its result $\mathbb{V}'_\tau$ is typically smaller than the minimum-size subset that is sound with
respect to the whole occupancy simplex, \eg resulting from the mixed-integer linear program \ref{milp:dominance}.
With catching, the inner products, \eg $\alpha_\tau(s_\tau^{\mathtt{c},o^{\pw}})$, need to be computed only once for each collection $V_\tau\in \mathbb{V}_\tau$, each vector $\alpha_\tau\in V_\tau$ and each conditional occupancy state $s_\tau^{\mathtt{c},o^{\pw}}$, which makes the running time of the bounded pruning method about $\pmb{O}(|\mathbb{V}_\tau||\bar{V}_\tau||\bar{S}_\tau||X||\bar{O}_\tau(s_\tau)|)$ where $\bar{O}_\tau(s_\tau) = \max_{s_\tau\in \bar{S}_\tau}\{o| o\in O_\tau, \Pr\{o|s_\tau\}>0  \}$. 

        \begin{algorithm}[H]
        \caption{Bounded pruning for $M'$.}
        \label{pruning:zsposg}
        \begin{algorithmic}
            \STATE ${\mathtt{function}~ \mathtt{Bounded\text{-}Pruning}(\mathbb{V},\bar{S}, \delta)}$
            \FOR{$V\in \mathbb{V}$}
                \STATE $\mathtt{refCount}(V) \gets 0$.
            \ENDFOR
            \FOR{$s\in \bar{S}$}
                \STATE $$V^* \gets \argmax_{V\in \mathbb{V}} \sum_{o^{\pw}\in O^{\pw}} s^{\mathtt{m},\pw}(o^{\pw}) \min_{\alpha\in V} \alpha(s^{\mathtt{c},o^{\pw}})$$
                \STATE increment $\mathtt{refCount}(V^*)$
            \ENDFOR
            \STATE \textbf{return} $\{V\in \mathbb{V} \mid \mathtt{refCount}(V)>0\}$
        \end{algorithmic}
        \end{algorithm}

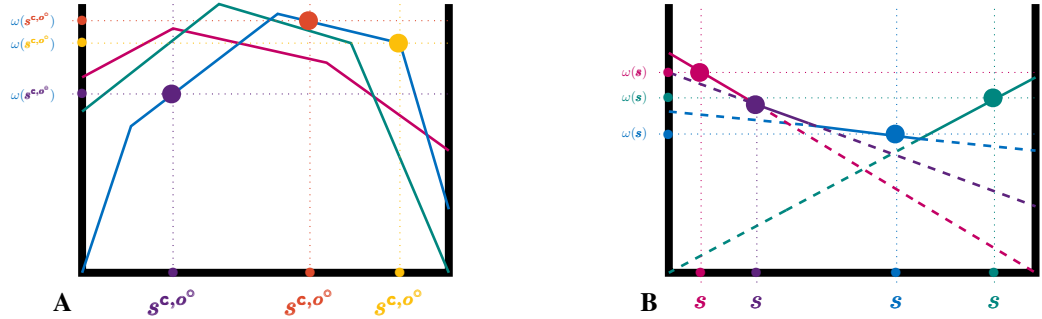
\begin{figure}[H]
\centering
\begin{tikzpicture}[
        scale=1.3,
        IS/.style={sthlmRed, thick},
        LM/.style={sthlmRed, thick},
        axis/.style={very thick, ->, >=stealth', line join=miter},
        important line/.style={thick}, dashed line/.style={dashed, thin},
        every node/.style={color=black},
        dot/.style={circle,fill=red,minimum size=8pt,inner sep=0pt, outer sep=-1pt},
    ]

    \draw[axis, -, line width=.3em] (-6,2.75) -- (-6,0) -- (-2.25,0) -- (-2.25,2.75);
    \node at (-6.2,-0.3) {\textbf{A}};
    \node at (-0.2,-0.3) {\textbf{B}};

    \draw[-, sthlmRed, line width=.1em] (-6,2) -- (-5.07, 2.5) -- (-3.5, 2.15) -- (-2.25, 1.25);

    \draw[-, sthlmGreen, line width=.1em] (-6,1.65) -- (-4.6, 2.75) -- (-3.25, 2.35) -- (-2.25,0);

    \draw[-, sthlmBlue, line width=.1em] (-6,0)-- (-5.5,1.5) -- (-4,2.65) -- (-2.75,2.35) -- (-2.25, .65);

    \node[scale=1, sthlmYellow] at (-2.75,-0.3) {$\pmb{s^{\mathtt{c},o^{\circ}}}$};
    
    \node[scale=1, sthlmPurple] at (-5.07,-0.3) {$\pmb{s^{\mathtt{c},o^{\circ}}}$};

    \node[scale=1, sthlmOrange] at (-3.67,-0.3) {$\pmb{s^{\mathtt{c},o^{\circ}}}$};

    \draw[-, dotted, sthlmYellow] (-2.75,2.75) -- (-2.75,0);
    \draw[-, dotted, sthlmYellow] (-6.1,2.35) -- (-2.25,2.35);    
    \node[rotate=90, scale=2, sthlmYellow] at (-2.75,2.35) {$\bullet$};
    \node[scale=.5, sthlmBlue] at (-6.5,2.35) {$\omega(\textcolor{sthlmYellow}{\pmb{s^{\mathtt{c},o^{\circ}}}})$};    
    \node[scale=1, sthlmYellow] at (-2.75,0) {$\bullet$};
    \node[scale=1, sthlmYellow] at (-6,2.35) {$\bullet$};

    \draw[-, dotted, sthlmOrange] (-3.67,2.75) -- (-3.67,0);
    \draw[-, dotted, sthlmOrange] (-6.1,2.58) -- (-2.25,2.58);    
    \node[rotate=90, scale=2, sthlmOrange] at (-3.67,2.58) {$\bullet$};
    \node[scale=.5, sthlmBlue] at (-6.5,2.58) {$\omega(\textcolor{sthlmOrange}{\pmb{s^{\mathtt{c},o^{\circ}}}})$};    
    \node[scale=1, sthlmOrange] at (-3.67,0) {$\bullet$};
    \node[scale=1, sthlmOrange] at (-6,2.58) {$\bullet$};

    \draw[-, dotted, sthlmPurple] (-5.07,2.75) -- (-5.07,0);
    \draw[-, dotted, sthlmPurple] (-6.1,1.83) -- (-2.25,1.83);    
    \node[rotate=90, scale=2, sthlmPurple] at (-5.07,1.83) {$\bullet$};
    \node[scale=.5, sthlmBlue] at (-6.5,1.83) {$\omega(\textcolor{sthlmPurple}{\pmb{s^{\mathtt{c},o^{\circ}}}})$};    
    \node[scale=1, sthlmPurple] at (-5.07,0) {$\bullet$};
    \node[scale=1, sthlmPurple] at (-6,1.83) {$\bullet$};

    \draw[axis, -, line width=.3em] (0,2.75) -- (0,0) -- (3.75,0) -- (3.75,2.75);

    \draw[-, sthlmGreen, line width=.1em] (3.75,2) -- (2.55,1.36015);
    \draw[-, sthlmGreen, line width=.1em, dashed] (2.55,1.36015) -- (1.2,0.6406);
    \draw[-, sthlmGreen, line width=.1em, dashed] (1.2,0.6406) -- (0,0);

    \draw[-, sthlmBlue, line width=.1em, dashed] (0,1.65)--(1.43, 1.51);     
    \draw[-, sthlmBlue, line width=.1em] (1.43, 1.51)--(2.57,1.37);     
    \draw[-, sthlmBlue, line width=.1em, dashed] (2.57,1.37) -- (3.75,1.25);     

     \draw[-, sthlmRed, line width=.1em] (0,2.25)--(0.9, 1.72); 
    \draw[-, sthlmRed, line width=.1em, dashed] (0.9, 1.72)--(2.55,0.72); 
    \draw[-, sthlmRed, line width=.1em, dashed] (2.55,0.72)--(3.75,0); 

     \draw[-, sthlmPurple, line width=.1em, dashed] (0, 2.05)--(0.9, 1.72); 
    \draw[-, sthlmPurple, line width=.1em] (0.9, 1.72)--(1.43, 1.53); 
    \draw[-, sthlmPurple, line width=.1em, dashed] (1.43, 1.53)--(3.75, 0.68);

    \draw[-, dotted, sthlmBlue] (2.33,-0.1) -- (2.33,2.75);    
    \draw[-, dotted, sthlmBlue] (-.1,1.42) -- (3.75,1.42);    
    \node[scale=1, sthlmBlue] at (2.33,-0.3) {$\pmb{s}$};    
    \node[scale=.5, sthlmBlue] at (-.35,1.42) {$\omega(\textcolor{sthlmBlue}{\pmb{s}})$};    
    \node[rotate=90, scale=2, sthlmBlue] at (2.33,1.42) {$\bullet$};
    \node[rotate=90, scale=1, sthlmBlue] at (2.33,0) {$\bullet$};
    \node[rotate=90, scale=1, sthlmBlue] at (0,1.42) {$\bullet$};

    \draw[-, dotted, sthlmRed] (.33,-0.1) -- (.33,2.75);    
    \draw[-, dotted, sthlmRed] (-.1,2.05) -- (3.75,2.05);    
    \node[scale=1, sthlmRed] at (.33,-0.3) {$\pmb{s}$};    
    \node[scale=.5, sthlmRed] at (-.35,2.05) {$\omega(\textcolor{sthlmRed}{\pmb{s}})$};    
    \node[rotate=90, scale=2, sthlmRed] at (.33,2.05) {$\bullet$};
    \node[rotate=90, scale=1, sthlmRed] at (.33,0) {$\bullet$};
    \node[rotate=90, scale=1, sthlmRed] at (0,2.05) {$\bullet$};

    \draw[-, dotted, sthlmGreen] (3.33,-0.1) -- (3.33,2.75);    
    \draw[-, dotted, sthlmGreen] (-.1,1.79) -- (3.75,1.79);    
    \node[scale=1, sthlmGreen] at (3.33,-0.3) {$\pmb{s}$};    
    \node[scale=.5, sthlmGreen] at (-.35,1.79) {$\omega(\textcolor{sthlmGreen}{\pmb{s}})$};    
    \node[rotate=90, scale=2, sthlmGreen] at (3.33,1.79) {$\bullet$};
    \node[rotate=90, scale=1, sthlmGreen] at (3.33,0) {$\bullet$};
    \node[rotate=90, scale=1, sthlmGreen] at (0,1.79) {$\bullet$};

    \draw[-, dotted, sthlmPurple] (0.9, -0.1) -- (0.9, 1.72);     
    \node[scale=1, sthlmPurple] at (0.9, -0.3) {$\pmb{s}$};
    \node[rotate=90, scale=2, sthlmPurple] at (0.9, 1.72) {$\bullet$};
    \node[rotate=90, scale=1, sthlmPurple] at (0.9, 0) {$\bullet$};

\end{tikzpicture}
\caption{Comparison between exact and bounded pruning methods.}
\label{figure:convex:function:representations:pruning}
\end{figure}  

Figure \ref{figure:convex:function:representations:pruning} compares the results of the exact and bounded pruning approaches on an example family of collections of linear functions. (Right) The purple collection would be pruned under bounded pruning, the foundation point of the purple line being on the red line. However, it would not be pruned under exact pruning, because it is strictly optimal for occupancy points outside of its foundation point (purple point). \textbf{Best viewed in color.}


\section{Additional Plots}
\label{sec:appendix:additional:plots}
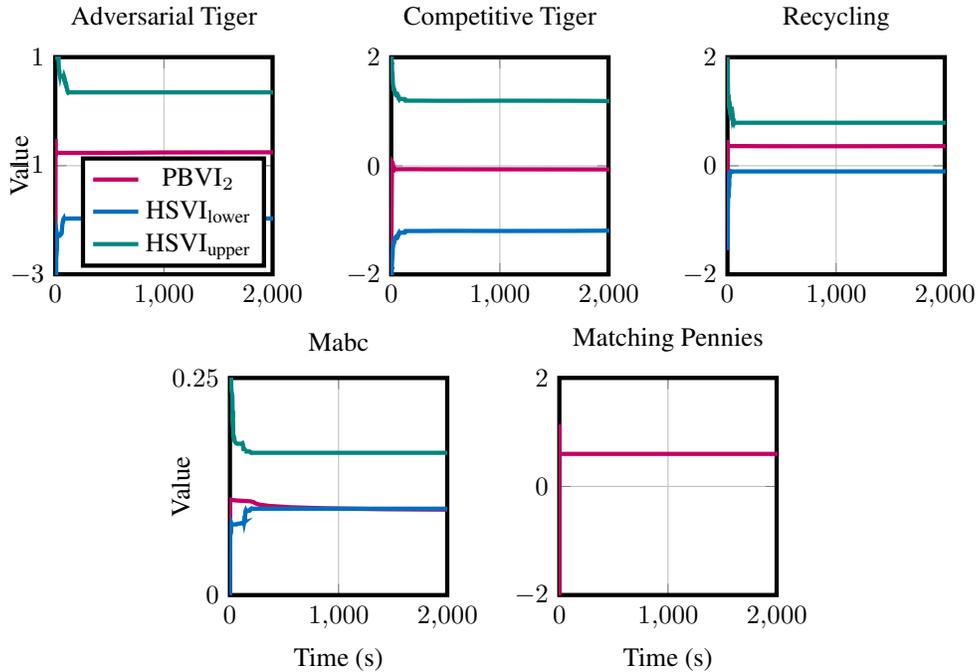
\begin{figure}[H]
  \centering
  \begin{subfigure}{0.32\textwidth}
    \centering
    \begin{tikzpicture}
      \begin{axis}[
        title={Adversarial Tiger},
        width=\linewidth,
        height=\linewidth,
        ylabel={Value},
        ylabel style={yshift=-5mm},
        xmin=0, xmax=2000,
        xtick={0, 1000, 2000}, 
        ymin=-3, ymax=1,
        ytick={-3, -1, 1},
        grid=both,
        style={ultra thick},
        legend pos=south east,
        ]
        \addplot+[smooth,mark=none,sthlmRed] table [col sep=comma, x=time, y=value] {./paper/results/PBVI_pruning_hyperplane/H4/adversarialtiger.csv};
        \addplot+[smooth,mark=none,sthlmBlue] table [col sep=comma, x=time, y=lb] {./paper/results/HSVI_Delage/H4/adversarialtiger.csv};
        \addplot+[smooth,mark=none,sthlmGreen] table [col sep=comma, x=time, y=ub] {./paper/results/HSVI_Delage/H4/adversarialtiger.csv};
        \legend{PBVI$_2$, HSVI$_\text{lower}$, HSVI$_\text{upper}$}
      \end{axis}
    \end{tikzpicture}
    \label{fig:lineplots:hsvi_vs_pbvi:adversarial_tiger}
  \end{subfigure}%
  \begin{subfigure}{0.32\textwidth}
    \centering
    \begin{tikzpicture}
      \begin{axis}[
        title={Competitive Tiger},
        width=\linewidth,
        height=\linewidth,
        xmin=0, xmax=2000,
        xtick={0, 1000, 2000}, 
        ymin=-2, ymax=2,
        ytick={-2, 0, 2},
        grid=major,
        style={ultra thick},
        ]
        \addplot+[smooth,mark=none,sthlmRed] table [col sep=comma, x=time, y=value] {./paper/results/PBVI_pruning_hyperplane/H4/competitivetiger.csv};
        \addplot+[smooth,mark=none,sthlmBlue] table [col sep=comma, x=time, y=lb] {./paper/results/HSVI_Delage/H4/competitivetiger.csv};
        \addplot+[smooth,mark=none,sthlmGreen] table [col sep=comma, x=time, y=ub] {./paper/results/HSVI_Delage/H4/competitivetiger.csv};
      \end{axis}
    \end{tikzpicture}
    \label{fig:lineplots:hsvi_vs_pbvi:competitive_tiger}
  \end{subfigure}%
  \begin{subfigure}{0.32\textwidth}
    \centering
    \begin{tikzpicture}
      \begin{axis}[
        title={Recycling},
        width=\linewidth,
        height=\linewidth,
        ylabel style={yshift=-5mm},
        xmin=0, xmax=2000,
        xtick={0, 1000, 2000}, 
        ymin=-2, ymax=2,
        ytick={-2, 0, 2},
        grid=both,
        style={ultra thick},
        ]
        \addplot+[smooth,mark=none,sthlmRed] table [col sep=comma, x=time, y=value] {./paper/results/PBVI_pruning_hyperplane/H4/recycling.csv};
        \addplot+[smooth,mark=none,sthlmBlue] table [col sep=comma, x=time, y=lb] {./paper/results/HSVI_Delage/H4/recycling.csv};
        \addplot+[smooth,mark=none,sthlmGreen] table [col sep=comma, x=time, y=ub] {./paper/results/HSVI_Delage/H4/recycling.csv};
        \end{axis}
    \end{tikzpicture}
    \label{fig:lineplots:hsvi_vs_pbvi:recycling}
  \end{subfigure}
  \begin{subfigure}{0.32\textwidth}
    \centering
    \begin{tikzpicture}
      \begin{axis}[
        title={Mabc},
        width=\linewidth,
        height=\linewidth,        
        xlabel={Time (s)},
        ylabel={Value},
        ylabel style={yshift=-5mm},
        xmin=0,
        xmax=2000,
        xtick={0, 1000, 2000}, 
        ymin=0.,
        ymax=0.25,
        ytick={0., 0.25},
        grid=both,
        style={ultra thick}
        ]
        \addplot+[smooth, mark=none, sthlmRed] table [col sep=comma, x=time, y=value] {./paper/results/PBVI_pruning_hyperplane/H4/mabc.csv};
        \addplot+[smooth, mark=none, sthlmBlue] table [col sep=comma, x=time, y=lb] {./paper/results/HSVI_Delage/H4/mabc.csv};
        \addplot+[smooth, mark=none, sthlmGreen] table [col sep=comma, x=time, y=ub] {./paper/results/HSVI_Delage/H4/mabc.csv};
        \end{axis}
    \end{tikzpicture}
    \label{fig:lineplots:hsvi_vs_pbvi:mabc}
  \end{subfigure}%
  \begin{subfigure}{0.32\textwidth}
    \centering
    \begin{tikzpicture}
      \begin{axis}[
        title={Matching Pennies},
        width=\linewidth,
        height=\linewidth,
        xlabel=Time (s),
        xmin=0, xmax=2000,
        xtick={0, 1000, 2000}, 
        ymin=-2, ymax=2,
        ytick={-2, 0, 2},
        grid=both,
        style={ultra thick}
        ]
        \addplot+[smooth,mark=none,sthlmRed] table [col sep=comma, x=time, y=value] {./paper/results/PBVI_pruning_hyperplane/H4/matchingpennies.csv};
        \end{axis}
    \end{tikzpicture}
    \label{fig:lineplots:hsvi_vs_pbvi:matching_pennies}
  \end{subfigure}%
  \caption{A visual representation of the performance of our best performing algorithm (PBVI$_2$) against the HSVI algorithm of \citet{DelBufDibSaf-DGAA-23} for horizon $\ell = 4$ across five different games. \textbf{Best viewed in color.}}
  \label{fig:hsvi_vs_pbvi_results}
\end{figure}

\section{Experiments on Larger Benchmarks}
\label{sec:appendix:largerbenchmark}
Preliminary experiments on larger benchmarks were already carried out to test the scalability of the method. Both chosen benchmarks exhibit a larger hidden state, action and observation space, yet PBVI$_3$ seems to show very promising results.

\begin{table}[H]
\centering
\caption{Snapshot of empirical results for larger benchmarks (Box pushing and Mars). For each game, we report time (in hours) and the best value for horizons $\ell$ computed with PBVI$_3$. {\sc oot} means a time limit of 2 hours has been exceeded.}
\label{table:larger_benchmark_results}
\begin{tabular}{@{}ccrr@{}} 
\toprule
Problem & $\ell$ & \multicolumn{2}{c}{{\bfseries PBVI$_3$}} \\ 
\cmidrule{1-2} \cmidrule{3-4}
\multirow{6}{*}{\shortstack[C]{Box\\pushing}} & 2 & 0 & 4.1 \\ 
& 3 & 2.36 & 4.92 \\ 
& 4 & OOT & 123 \\ 
& 5 & OOT & 216 \\ 
& 7 & OOT & 400 \\ 
& 10 & OOT & 723 \\ 
\cmidrule{1-2} \cmidrule{3-4}
\multirow{2}{*}{Mars} & 2 & 0.08 & -1.12 \\ 
& 3 & 40 & -4.0 \\ 
\bottomrule
\end{tabular}
\end{table}

\section{Experiments on Subclasses}
\label{sec:appendix:subclasses}

\begin{table}[H]
\caption{Snapshot of empirical results. For each subclass of game (Markov Game and Public Observable Markov Game) and algorithms, we report time (in seconds) and the best value for the horizon $\ell$.}
\label{table:subclass_results}
\centering
\scriptsize
\begin{tabular}{@{}c c rr | c rr @{}}
\toprule
Problem & \multicolumn{3}{c|}{MG} & \multicolumn{3}{c}{POMG} \\
\cmidrule[0.4pt](lr{0.125em}){1-4} \cmidrule[0.4pt](lr{0.125em}){5-7}
\centering
     & $\ell$ & Time & Value & $\ell$ & Time & Value \\
\midrule
Adversarial tiger 
     & 1 & 0 & 0.25 & 1 & 0 & 0.25 \\
     & 50 & 0 & 11.9 & 2 & 0 & 0 \\
     & 100 & 0 & 23.8 & 3 & 0 & 0.2 \\
     & 1000 & 2 & 238 & 4 & 2 & 0.2 \\
\midrule
Competitive tiger
     & 1 & 0 & 0 & 1 & 0 & 0 \\
     & 50 & 0 & 0 & 2 & 0 & 0 \\
     & 100 & 0 & 0 & 3 & 2 & 0 \\
     & 1000 & 3 & 0 & 4 & 0 & 0 \\
\midrule
Recycling 
     & 1 & 0 & -0.4 & 1 & 0 & -0.4 \\
     & 50 & 0 & 1.36 & 2 & 0 & 0.28 \\
     & 100 & 0 & 2.41 & 3 & 2 & 0.34 \\
     & 1000 & 4 & 21 & 4 & 26 & 0.38 \\
\midrule
Mabc
     & 1 & 0 & 0.05 & 1 & 0 & 0.05 \\
     & 50 & 0 & 0.52 & 2 & 0 & 0.88 \\
     & 100 & 0 & 0.98 & 3 & 0 & 0.99 \\
     & 1000 & 3 & 9.15 & 4 & 4 & 0.11 \\
\midrule
Matching pennies
     & 1 & 0 & 0 & 1 & 0 & 0 \\
     & 50 & 0 & -49 & 2 & 0 & -1 \\
     & 100 & 0 & -99 & 3 & 0 & -2 \\
     & 1000 & 2 & -999 & 4 & 0 & -3 \\
\bottomrule
\end{tabular}
\end{table}

\section{Computational Resources}
\label{sec:computational_resources}
The results presented in this paper were run on CPU. For CFR+, AMD Ryzen was used with 90G of available space. For HSVI/PBVI, a CPU was also used with 30-70G of available space for smaller horizons and 90-110G for higher horizons.

\end{document}